\documentclass[]{article}

\usepackage{amsfonts,amssymb,amsmath,amsthm}
\usepackage{stmaryrd}
\usepackage{nicefrac}
\usepackage{pgf}
\usepackage{tikz}
	\usetikzlibrary{trees,decorations,arrows,automata,positioning,plotmarks,
	shapes,backgrounds,petri}
\usepackage{xspace}
\usepackage{extarrows}
\usepackage{color}
\usepackage{paralist}
\usepackage[left=2.25cm,right=2.25cm,top=3cm,bottom=3cm]{geometry}
\usepackage{hyperref}

\usepackage{DistributabilityMobileAmbients}

\bibliographystyle{eptcs}

%%%%%%%%%%%%%%%%%%%%%%%
%  title information  %
%%%%%%%%%%%%%%%%%%%%%%%

\title{On the Distributability of Mobile Ambients (Technical Report)}

\author{Kirstin Peters
	\qquad\qquad Uwe Nestmann
	\vspace{0.5em}\\
	TU Berlin, Germany
}

\begin{document}

\maketitle

\begin{abstract}
	Modern society is dependent on distributed software systems and to verify them different modelling languages such as mobile ambients were developed.
	To analyse the quality of mobile ambients as a good foundational model for distributed computation, we analyse the level of synchronisation between distributed components that they can express.
	Therefore, we rely on earlier established synchronisation patterns.
	It turns out that mobile ambients are not fully distributed, because they can express enough synchronisation to express a synchronisation pattern called \patternM.
	However, they can express strictly less synchronisation than the standard \piCal.
	For this reason, we can show that there is no good and distributability-preserving encoding from the standard \piCal into mobile ambients and also no such encoding from mobile ambients into the \joinCal, \ie the expressive power of mobile ambients is in between these languages.
	Finally, we discuss how these results can be used to obtain a fully distributed variant of mobile ambients.

	This paper is an extended version of \cite{petersNestmann18}.
\end{abstract}

\section{Introduction}

Modern society is increasingly dependent on large-scale software systems that are distributed, collaborative, and communication-centred. Most of the existing approaches that analyse the distributability of concurrent systems use special formalisms often equipped with an explicit notion of location, \eg \cite{bestDarondeau11} in Petri nets or the distributed \piCal \cite{hennessy07}. Other approaches implement locations implicitly, as \eg the parallel operator in the \piCal that combines different distributed components of a system. In the latter case, we consider \emph{distributability} and, thus, all possible explicitly-located variants of a calculus.

The \piCal \cite{milnerParrowWalker92} is a well-known and frequently used process calculus to model concurrent systems. Therein, intuitively, the \emph{degree of distributability} cor\-res\-ponds to the number of parallel components that can act independently. Practical experience, though, has shown that it is not possible to implement every \piCal term---not even every asynchronous one---in an asynchronous setting while preserving its degree of distributability. To overcome these problems \eg the \joinCal \cite{levy97} or the distributed \piCal \cite{hennessy07} were introduced as models of distributed computation.

To analyse the quality of an approach as a good foundational model for distributed computation, we compare the expressiveness of different such models \wrt to their power to express synchronisation between distributed components.
Such synchronisations make the implementation of terms in an asynchronous setting difficult and, thus, indicate languages that are not suitable to describe distributed computation.
In particular, we try to identify hidden sources of synchronisation, \ie synchronisation that was not intended with the design of the calculus.

\paragraph{Distributability and Synchronisation Patterns.}
To analyse the degree of distribution in process calculi and to compare different calculi by their power to express synchronisation, \cite{petersNestmannGoltz13, peters12} defines a criterion for the preservation of distributability in encodings and introduces synchronisation patterns to describe minimal forms of synchronisation.
Process calculi are then separated by their power to express such synchronisation patterns and, thus, by the kinds of synchronisation that they contain.
Therefore, we show that no good and distributability-preserving encoding can exist from a calculus with enough synchronisation to express some synchronisation pattern into a calculus that cannot express this pattern.
In this sense, synchronisation patterns have two purposes:
\begin{inparaenum}[(1)]
	\item First, they describe some particular form or level of synchronisation in an abstract and model-independent way. Thereby, they help to spot forms of synchronisation---in particular, forms of synchronisation that were not intended with the design of the respective calculus.
	\item Second, they allow to separate calculi along their ability to express the respective pattern and the respective level of synchronisation.
\end{inparaenum}

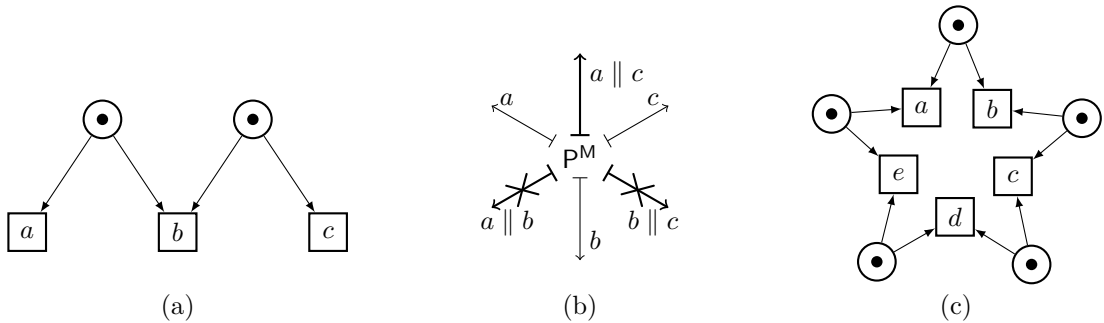
\begin{figure}[t]
	\centering
	\tikzstyle{place}=[circle,draw=black,thick,minimum size=5mm]
	\tikzstyle{transition}=[rectangle,draw=black,thick,minimum size=5mm]
	\begin{tikzpicture}
		% fully reachable pure M in Petri Nets
		\node[place,tokens=1]	(p) at (1, 1.5) {};
		\node[place,tokens=1]	(q) at (3, 1.5) {};
		\node[transition]		(a) at (0, 0) {$ a $};
		\node[transition]		(b) at (2, 0) {$ b $};
		\node[transition]		(c) at (4, 0) {$ c $};

		\draw[-latex] (p) -- (a);
		\draw[-latex] (p) -- (b);
		\draw[-latex] (q) -- (b);
		\draw[-latex] (q) -- (c);

		\node (l) at (2, -1) {(a)};
	\end{tikzpicture}
	\hspace*{4em}
	\begin{tikzpicture}%[node distance=1.5cm, auto]
		\foreach \x/\xtext in {1/a,2/ab,3/b,4/bc,5/c,6/ac}
        {
            \path (360*\x/6+90:1.5) node (\xtext) {};
        }
		\node (p)						{$ \PM $};

		\path[|->]			(p) edge node[near end, above] {$ a $}	(a);
		\path[|->]			(p) edge node[near end, right] {$ b $}	(b);
		\path[|->]			(p) edge node[near end, above] {$ c $}	(c);
		\path[|->,thick]	(p) edge node[sloped, scale=2] {$ \times $} node[near end, below] {$ a \parallel b $}	(ab);
		\path[|->,thick]	(p) edge node[near end, right] {$ a \parallel c $}	(ac);
		\path[|->,thick]	(p) edge node[sloped, scale=2] {$ \times $} node[near end, below] {$ b \parallel c $}	(bc);

		\node (l) at (0, -2) {(b)};
	\end{tikzpicture}
	\hspace*{4em}
	\begin{tikzpicture}
		\foreach \x/\xtext in {1/e,2/d,3/c,4/b,5/a}
        {
            \path (360*\x/5+125:0.8) node[transition] (\xtext) {$\xtext$};
            \path (360*\x/5-55:1.75) node[place,tokens=1] (p\x) {};
        }

        \draw[-latex] (p2) -- (a);
        \draw[-latex] (p2) -- (b);

        \draw[-latex] (p1) -- (b);
        \draw[-latex] (p1) -- (c);

        \draw[-latex] (p5) -- (c);
        \draw[-latex] (p5) -- (d);

        \draw[-latex] (p4) -- (d);
        \draw[-latex] (p4) -- (e);

        \draw[-latex] (p3) -- (e);
        \draw[-latex] (p3) -- (a);

		\node (l) at (0, -2) {(c)};
	\end{tikzpicture}
	\vspace*{-1em}
	\caption{A fully reachable pure \patternM in Petri nets (a), the \patternM as state in a transition system (b), and the synchronisation pattern \patternGreatM in Petri nets (c).}
	\label{fig:MInPetri}
	\label{fig:greatMInPetri}
\end{figure}

In \cite{petersNestmannGoltz13}, two synchronisation patterns, the pattern \patternM and the pattern \patternGreatM, are highlighted.
An \patternM, as visualised in Figure~\ref{fig:MInPetri}~(a), describes a Petri net that consists of two parallel transitions ($ a $ and $ c $) and one transition ($ b $) that is in conflict with both of the former. In other words, it describes a situation where either two parts of the net can proceed independently or they synchronise to perform a single transition together. \cite{glabbeekGoltzSchicke08b,glabbeekGoltzSchicke12} states that a Petri net specification can be implemented in an asynchronous, fully distributed setting iff it does not contain a fully reachable pure \patternM. Accordingly, they denote such Petri nets as distributable. They also present a description of a fully reachable pure \patternM as conditions on a state $ \PM $ in a step transition system, as visualized in Figure~\ref{fig:MInPetri}~(b), which allows us to directly use this pattern to reason about process calculi.
Note that $ a $, $ b $, and $ c $ in Figure~\ref{fig:MInPetri}~(b) are not labels. They serve just to distinguish different steps. Moreover, $ x \parallel y $ refer to the parallel execution of $ x $ and $ y $, given a step semantics.
Hence, a process calculus is distributable iff it does not contain a non-local \patternM.
A \patternGreatM is a chain of conflicting and distributable steps as they occur in an \patternM that build a circle of odd length.
The Figure~\ref{fig:greatMInPetri}~(c) nicely illustrates this circle of \patternM. There is \eg one \patternM consisting of the transitions $ a $, $ b $, and $ c $ with their corresponding two places. Another \patternM is build by the transitions $ b $, $ c $, and $ d $ with their corresponding two places and so on.

These patterns are then used to locate various $ \pi $-like calculi within a hierarchy with respect to the level of synchronisation that can be expressed in these languages.
More precisely, \cite{petersNestmannGoltz13} shows that
\begin{inparaenum}[(1)]
\item the \joinCal is distributed, because it does not contain either of the two synchronisation patterns,
\item the asynchronous \piCal and its extension with separate choice can express the pattern \patternM but no pattern \patternGreatM, whereas the standard \piCal with mixed choice contains \patternM's and \patternGreatM's.
\end{inparaenum}

\paragraph{Mobile Ambients.}
In the current paper, we use the technique derived in \cite{petersNestmannGoltz13} to analyse the degree of distribution in mobile ambients.
Mobile ambients were introduced in \cite{cardelliGordon98, cardelliGordon00}.
Similar to the \joinCal, mobile ambients were designed as a calculus for distributed systems.
But, in contrast to the \joinCal, they do contain the pattern \patternM, as we show in the following.
Accordingly, mobile ambients are not fully distributed and their implementation in a fully distributed setting is difficult.
Fortunately, the little amount of synchronisation that is contained in mobile ambients is not enough to express the \patternGreatM.
Thus, mobile ambients are less synchronous than, \eg, the standard pi-calculus.
Moreover, the nature of the pattern \patternM that we find in mobile ambients tells us what kind of features lead to synchronisation in mobile ambients.
More precisely, we show that synchronisation in mobile ambients results from the so-called $ \maOpenAct $-actions and the fact that different ambients may share the same name.
This observation allows us to discuss ways to obtain a variant of mobile ambients that is free of hidden synchronisations and can, thus, be implemented easily in a distributed setting.

\paragraph{Overview.}
Section~\ref{sec:technicalPreliminaries} introduces process calculi (\S~\ref{sec:processCalculi}), mobile ambients (\S~\ref{sec:mobileAmbients}), encodings (\S~\ref{sec:encodings}), and synchronisation patterns together with some results of \cite{petersNestmannGoltz13} (\S~\ref{sec:distributability}) that are necessary for this paper.
In Section~\ref{sec:MinMA}, we show that mobile ambients can express enough synchronisation to contain pattern \patternM and that this implies that there is no good and distributability-preserving encoding from mobile ambients into the \joinCal.
Section~\ref{sec:conflictsInMA} analyses the nature of conflicts in mobile ambients that limits the forms of synchronisation they can express. It is shown that mobile ambients do not contain \patternGreatM-patterns; this separates them from the standard \piCal.
The observations on the nature of synchronisation in mobile ambients is then used in Section~\ref{sec:distributeMA} to discuss ways to obtain a distributed variant of mobile ambients.
We conclude with Section~\ref{sec:conclusions}.
This paper is an extended version of \cite{petersNestmann18}.

\section{Technical Preliminaries}
\label{sec:technicalPreliminaries}

We start with some general observations on process calculi and the relevant notions that we need for the comparison of process calculi as described in \cite{petersNestmannGoltz13}.
Then we describe the calculus of mobile ambients as introduced in \cite{cardelliGordon98, cardelliGordon00}
and ``good'' encodings as defined in \cite{gorla10}.
Finally, we shortly revise the results of \cite{petersNestmannGoltz13} that are relevant for our analysis of mobile ambients.

\subsection{Process Calculi}
\label{sec:processCalculi}

A \emph{process calculus} is a language $ \lang = \ProcCal{\proc}{\step} $ that consists of a set of process terms $ \proc $ (its syntax) and a relation $ {\step} : \proc \times \proc $ on process terms (its reduction semantics). We often refer to process terms also simply as processes or as terms and use upper case letters $ P, Q, R, \ldots, P', P_1, \ldots $ to range over them.

Assume a countably-infinite set $ \names $, whose elements are called \emph{names}. We use lower case letters such as $ a, b, c, \ldots, a', a_1, \ldots $ to range over names. Let $ \tau \notin \names $.
The \emph{syntax} of a process calculus is usually defined by a context-free grammar defining operators, \ie functions $ \operatorname{op} : \names^n \times \proc^m \to \proc $. An operator of arity $ 0 $, \ie $ m = 0 $, is a \emph{constant}. The arguments that are again process terms are called \emph{subterms} of $ P $.

\begin{definition}[Subterms]
	\label{def:subterms}
	Let $ \ProcCal{\proc}{\step} $ be a process calculus and $ P \in \proc $. The set of \emph{subterms} of $ P = \operatorname{op}\left( x_1, \ldots, x_n, P_1, \ldots, P_m \right) $ is defined recursively as $ \Set{ P } \cup \Set{ P' \mid \exists i \in \Set{ 1, \ldots, m } \logdot P' \text{ is a subterm of } P_i } $.
\end{definition}

\noindent
With Definition~\ref{def:subterms}, every term is a subterm of itself and constants have no further subterms.
We add the special constant $ \success $ to each process calculus. Its purpose is to denote \emph{success} (or \emph{successful termination}) which allows us to compare the abstract behaviour of terms in different process calculi as described in Section~\ref{sec:encodings}.
Therefore, we require that each language defines a predicate $ P\hasSuccess $ that holds if the term $ P $ is successful (or has terminated successfully). Usually, this predicate holds if $ P $ contains an occurrence of $ \success $ that is \emph{unguarded} (see mobile ambients below).

A \emph{scope} defines an area in which a particular name is known and can be used. For several reasons, it can be useful to restrict the scope of a name. For instance to forbid interaction between two processes or with an unknown and, hence, potentially untrusted environment. Names whose scope is restricted such that they cannot be used beyond their scope are called \emph{bound names}. The remaining names are called \emph{free names}.
As ususal, we define three sets of names occurring in a process term: the set $ \Names{P} $ of all of $ P $'s names, and its subsets $ \FreeNames{P} $ of free names and $ \BoundNames{P} $ of  bound names. In the case of bound names, their syntactical representation as lower case letters serves as a place holder for any fresh name, \ie any name that does not occur elsewhere in the term. To avoid confusion between free and bound names or different bound names, bound names can be replaced with fresh names by \emph{{\alphaConv}}. We write $ P \equiv_\alpha Q $ if $ P $ and $ Q $ differ only by {\alphaConv}.

We assume that the \emph{semantics} is given as an \emph{operational semantics} consisting of inference rules defined on the operators of the language \cite{plotkin04}. For many process calculi, the semantics is provided in two forms, as \emph{reduction semantics} and as \emph{labelled transition semantics}. We assume that at least the reduction semantics $ \step $ is given as part of the definition, because its treatment is easier in the context of encodings.
A single application of the reduction semantics is called a \emph{(reduction) step} and is written as $ P \step P' $. If $ P \step P' $, then $ P' $ is called \emph{derivative} of $ P $. Let $ P \step $ (or $ P \noStep $) denote the existence (absence) of a step from $ P $, and let $ \steps $ denote the reflexive and transitive closure of $ \step $. A sequence of reduction steps is called a \emph{reduction}.
We write $ P \step^{\omega} $ if $ P $ has an infinite sequence of steps and call $ P $ \emph{convergent} if $ \neg\left( P \step^{\omega} \right) $. We also use \emph{{execution}} to refer to a reduction starting from a particular term. A \emph{maximal execution} of a process $ P $ is a reduction starting from $ P $ that cannot be further extended, \ie that is either infinite or of the form $ P \steps P' \noStep $.

We extend the predicate $ P\hasSuccess $ to reachability of success. A term $ P \in \proc $ reaches success, written as $ P\reachSuccess $, if it reaches a derivative that is successful, \ie $ P\reachSuccess \deff \exists P' \logdot P \steps P' \wedge P'\hasSuccess $.
We write $ P\mustReachSuccessFinite $, if $ P $ reaches success in every finite maximal execution.

To reason about environments of terms, we use functions on process terms called contexts. More precisely, a \emph{context} $ \Context{}{}{\hole_1, \ldots, \hole_n} : \proc^n \to \proc $ with $ n $ holes is a function from $ n $ terms into a term, \ie given $ P_1, \ldots, P_n \in \proc $, the term $ \Context{}{}{P_1, \ldots, P_n} $ is the result of inserting $ P_1, \ldots, P_n $ in the corresponding order into the $ n $ holes of~$ \context $.

We assume the calculi \piMix for the standard \piCal (with mixed choice) as defined in \cite{milnerParrowWalker92} and its subcalculi the \piCal with only separate choice (\piSep), \ie there all parts of the same choice construct are either all guarded by an input or all guarded by an output prefix, and the asynchronous \piCal (\piAsyn) as introduced in \cite{boudol92, hondaTokoro91}.
Moreover, we assume the \joinCal (\join) as introduced in \cite{fournetGonthier96}.

\begin{definition}[Syntax, \cite{petersNestmannGoltz13}]
	\label{def:syntax}
	The sets of process terms are given by
	\begin{center}
		$ \begin{array}{ll}
			\procMix \deffTerms & P_1 \mid P_2 \sep \success \sep \Res{n}{P} \sep !P \sep \sum_{i \in \indexSet} \guard_i.P_i\\
				& \guard \deffTerms \Output{y}{z} \sep \Input{y}{x} \sep \tau \vspace*{0.5em}\\
			\procSep \deffTerms & P_1 \mid P_2 \sep \success \sep \Res{n}{P} \sep !P \sep \sum_{i \in \indexSet} \guard^O_i.P_i \sep \sum_{i \in \indexSet} \guard^I_i.P_i\\
				& \guard^O \deffTerms \Output{y}{z} \sep \tau \quad \text{ and } \quad
				\guard^I \deffTerms \Input{y}{x} \sep \tau \vspace*{0.5em}\\
			\procAsyn \deffTerms & \nullTerm \sep P_1 \mid P_2 \sep \success \sep \!\Res{n}{P} \sep !P \sep \Output{y}{z}\! \sep \Input{y}{x}.P \sep \!\tau.P \vspace*{0.5em}\\
			\procJoin \deffTerms & \nullTerm \sep P_1 \mid P_2 \sep \success \sep \JOutput{y}{z} \sep \JDefShort{D}{P}\\
				& J \deffTerms \JInput{y}{x} \sep J_1 \mid J_2 \quad \text{ and } \quad
				D \deffTerms J \triangleright P \sep D_1 \wedge D_2 %\vspace*{0.5em}\\
%			\procCSPIn \deffTerms & \cspStop \sep \CspParallel{P_1}{P_2} \sep \success \sep P \setminus n \sep *\!P \sep \CspPrefix{\CspOutput{y}{z}}{P} \sep \left[ \; C \; \right]\\
%				& C \deffTerms G \sep G \cspChoice C \quad \text{and} \quad
%				G \deffTerms \CspPrefix{\CspInput{y}{x}}{P} \sep \CspPrefix{\tau}{P} \vspace*{0.5em}\\
%			\procCSPNo \deffTerms & \cspStop \sep \CspParallel{P_1}{P_2} \sep \success \sep P \setminus n \sep *\!P \sep \CspPrefix{\CspOutput{y}{z}}{P} \sep\\
%				& \CspPrefix{\CspInput{y}{x}}{P} \sep \left[ \; C^{\tau} \; \right]\\
%				& C^{\tau} \deffTerms \CspPrefix{\tau}{P} \sep \CspPrefix{\tau}{P} \cspChoice C^{\tau}
		\end{array} $
	\end{center}
	for some names $ n, x, y, z \in \names $ and a finite index set $ \indexSet $.
\end{definition}

In all languages the \emph{empty process} is denoted by $ \nullTerm $ and $ P_1 \mid P_2 $ defines \emph{parallel composition}. Within the \piCali \emph{restriction} $ \Res{n}{P} $ restricts the scope of the name $ n $ to the definition of $ P $ and $ !P $ denotes \emph{replication}. The process term $ \sum_{i \in \indexSet} \guard_i.P_i $ represents \emph{finite guarded choice}; as usual, the sum $ \sum_{i \in \Set{ 1, \ldots, n }} \guard_i.P_i $ is sometimes written as $ \guard_1.P_1 + \ldots + \guard_n.P_n $ and $ \nullTerm $ abbreviates the empty sum, \ie where $ \indexSet = \emptyset $. The input prefix $ \Input{y}{x} $ is used to describe the ability of receiving the value $ x $ over link $ y $ and, analogously, the output prefix $ \Output{y}{z} $ describes the ability to send a value $ z $ over link $ y $. The prefix $ \tau $ describes the ability to perform an internal, not observable action. The choice operators of \piMix and \piSep require that all branches of a choice are guarded by one of these prefixes. We omit the match prefix, because it does not influence the results.

In $ \procJoin $ the operator $ \JOutput{y}{z} $ describes an output prefix similar to $ \procAsyn $.
A \emph{definition} $ \JDefShort{D}{P} $ defines a new receiver on fresh names, where $ D $ consists of one or several elementary definitions $ J \triangleright P $ connected by $ \wedge $, $ J $ potentially joins several reception patterns $ \JInput{y}{x} $ connected by $ \mid $, and $ P $ is a process. Note that $ \JDefShort{D}{P} $ unifies the concepts of restriction, input prefix, and replication of the \piCal.

As usual, the continuation $ \nullTerm $ is often omitted, so \eg $ \Input{y}{x}\!.\nullTerm $ becomes $ \Input{y}{x} $.
In addition, for simplicity in the presentation of examples, we sometimes omit an action's object when it does not effectively contribute to the behaviour of a term, \eg $ \Input{y}{x}.\nullTerm $ is written as $ \In{y}.\nullTerm $ or just $ \In{y} $, and $ \JDef{\JInput{y}{x}}{\nullTerm}{\JOutput{y}{z}} $ is abbreviated as $ \JDef{\JIn{y}}{\nullTerm}{\JOut{y}} $.
Moreover, let $ \Res{\tilde{x}}{P} $ abbreviate the term $ \Res{x_1}{\ldots \Res{x_n}{P}} $.

The definitions of free and bound names are completely standard, \ie names are bound by restriction and as parameter of input and $ \Names{P} = \FreeNames{P} \cup \BoundNames{P} $ for all $ P $. In the \joinCal the definition $ \JDefShort{D}{P} $ binds for all elementary definitions $ J_i \triangleright P_i $ in $ D $ and all join pattern $ \JInput{y_{i, j}}{x_{i, j}} $ in $ J_i $ the \emph{received variables} $ x_{i, j} $ in the corresponding $ P_i $ and the \emph{defined variables} $ y_{i, j} $ in $ P $.

To compare process terms, process calculi usually come with different well-studied equivalence relations (see \cite{vanglabbeek01} %\cite{vanglabbeek01,vanglabbeek93}
for an overview). A special kind of equivalence with great importance to reason about processes are \emph{congruences}, \ie the closure of an equivalence with respect to contexts.
Process calculi usually come with a special congruence $ {\equiv} \mathrel{\subseteq} {\proc \times \proc} $ called \emph{structural congruence}. Its main purpose is to equate syntactically different process terms that model quasi-identical behaviour. For the above variants of the \piCal we have:
\begin{center}
	$ P \equiv Q \quad \text{ if } P \equiv_\alpha Q \hspace*{2em}
	P \mid \nullTerm \equiv P \hspace*{2em}
	P \mid Q \equiv Q \mid P \hspace*{2em}
	P \mid \left( Q \mid R \right) \equiv \left( P \mid Q \right) \mid R \hspace*{2em}
	!P \equiv P \mid !P $\\
	$ \Res{n}{\nullTerm} \equiv \nullTerm \hspace*{2em}
	\Res{n}{\Res{m}{P}} \equiv \Res{m}{\Res{n}{P}} \hspace*{2em}
	P \mid \Res{n}{Q} \equiv \Res{n}{\left( P \mid Q \right)} \quad \text{ if } n \notin \FreeNames{P} $
\end{center}
The entanglement of input prefix and restriction within the definition operator of the \joinCal limits the flexibility of relations defined by sets of equivalence equations. Instead structural congruence is given by an extension of the chemical approach in \cite{berryBoudol90} by the heating and cooling rules. They operate on so-called solutions $ \mathcal{R} \; \vdash \; \mathcal{M} $, where $ \mathcal{R} $ and $ \mathcal{M} $ are multisets. We have
\begin{inparaenum}[(1)]
	\item $ \vdash P \mid Q \rightleftharpoons \; \vdash P, Q $,
	\item $ D \wedge E \vdash \; \rightleftharpoons D, E \vdash $, and
	\item $ \vdash \JDefShort{D}{P} \rightleftharpoons \sigma_{dv}(D) \vdash \sigma_{dv}(P) $,
\end{inparaenum}
where only elements|separated by commas|that participate in the rule are mentioned and $ \sigma_{dv} $ instantiates the defined variables in $ D $ to distinct fresh names. Then $ P \equiv Q $ if $ P $ and $ Q $ differ only by applications of the $ \rightleftharpoons $-rules, \ie if $ \vdash P \rightleftharpoons \; \vdash Q $.

The semantics of the above variants of the \piCal is given by the axioms
\begin{center}
	$ \left( \ldots + \tau.P + \ldots \right) \step P \hspace*{1.4em} \left( \ldots + \Input{y}{x}.P + \ldots \right) \mid \left( \ldots + \Output{y}{z}.Q + \ldots \right) \step \Set{ \Subst{z}{x} }P \mid Q $
\end{center}
for \piMix and \piSep, the axioms $ \tau.P \step P $ and $ \Input{y}{x}.P \mid \Output{y}{z} \step \Set{ \Subst{z}{x} }P $ for \piAsyn, and the three rules
\begin{center}
	$ \dfrac{P \step P'}{P \mid Q \step P' \mid Q} \hspace*{2em} \dfrac{P \step P'}{\Res{n}{P} \step \Res{n}{P'}} \hspace*{2em} \dfrac{P \equiv Q \quad \quad Q \step Q' \quad Q' \equiv P'}{P \step P'} $
\end{center}
that hold for all three variants \piMix, \piSep, and \piAsyn.
The operational semantics of \join is given by the heating and cooling rules (see structural congruence) and the reduction rule $ J \triangleright P \vdash \sigma_{rv}(J) \step J \triangleright P \vdash \sigma_{rv}(P) $, where $ \sigma_{rv} $ substitutes the transmitted names for the distinct received variables.

Recursion or replication distinguishes itself from other operators by the fact that (one of) its subterms can be copied within rules of structural congruence in the \piCal or by reduction rules in the \joinCal while the operator itself is usually never removed during reductions. We call such operators and capabilities \emph{recurrent}.
We denote the parts of a term that are removed in reduction steps as \emph{capabilities}.

\subsection{Mobile Ambients}
\label{sec:mobileAmbients}

\emph{Mobile ambients} (\MA) were introduced in \cite{cardelliGordon98, cardelliGordon00} as a process calculus for distributed systems with mobile computations. They define \emph{ambients} as bounded places on that computations may happen and that can be moved (with their computations).
Their syntax is defined in two stages: the first stage describes \emph{ambient processes} and the nesting of ambients; the second stage describes the movements of ambients.

\begin{definition}[Syntax, \cite{cardelliGordon00}]
	The set of \emph{ambient processes} $ \procMA $ is given as
	\begin{center}
		$ \begin{array}{ll}
			\procMA \deffTerms & \nullTerm \sep P_1 \mid P_2 \sep \success \sep \maRes{n}{P} \sep \maRep{P} \sep \maLoc{n}{P} \sep M.P\\
				& M \deffTerms \maIn{n} \sep \maOut{n} \sep \maOpen{n}
		\end{array} $
	\end{center}
	for some names $ n \in \names $.
\end{definition}

The \emph{empty process} is denoted by $ \nullTerm $ and $ P_1 \mid P_2 $ define \emph{parallel composition}. \emph{Restriction} $ \maRes{n}{P} $ restricts the scope of the name $ n $ to the definition of $ P $.  \emph{Replication} $ !P $ provides potentially infinitely many copies of $ P $.
The $ \maLoc{n}{P} $ describes an \emph{ambient} $ n $ in which the process $ P $ is located. Ambients may exhibit a tree structure induced by the nesting of ambient brackets. The term $ M.P $ defines the exercise of capability $ M $, which could be either ``$ \maIn{n} $'' to \emph{enter} ambient $ n $, or ``$ \maOut{n} $'' to \emph{exit} from ambient $ n $, or ``$\maOpen{n} $'' to \emph{open} ambient~$ n $.
As usual, the continuation $ \nullTerm $ is often omitted.
Moreover, we often abbreviate $ \maLoc{n}{\nullTerm} $ by $ \maLoc{n}{} $
and let $ \maRes{\tilde{x}}{P} $ abbreviate the term $ \maRes{x_1}{\ldots \maRes{x_n}{P}} $.

Restriction is the only binder of mobile ambients, \ie the names are bound by restriction and all names of a process that are not bound by restriction are free.
The ``$ . $'' in $ M.P $ denotes sequential composition, where the $ M $ guards the subterm $ P $.
A subterm of a process is unguarded if it is not hidden behind a guard.
As usual, $ P\hasSuccess $ if $ P $ contains an unguarded occurrence of success.

For mobile ambients, \cite{cardelliGordon00} define structural congruence as the least congruence that satisfies the rules of $ \equiv $ defined above and additionally the rules $ !\nullTerm \equiv \nullTerm $ and $ \maRes{n}{\left( \maLoc{m}{P} \right)} \equiv \maLoc{m}{\maRes{n}{P}} $ if $ n \neq m $.
%\begin{center}
%	$ P \equiv Q \quad \text{ if } P \equiv_\alpha Q \hspace*{2em}
%	P \mid \nullTerm \equiv P \hspace*{2em}
%	P \mid Q \equiv Q \mid P \hspace*{2em}
%	P \mid \left( Q \mid R \right) \equiv \left( P \mid Q \right) \mid R \hspace*{2em}
%	!P \equiv P \mid !P $\\
%	$ !\nullTerm \equiv \nullTerm \hspace*{2em}
%	\maRes{n}{\nullTerm} \equiv \nullTerm \hspace*{2em}
%	\maRes{n}{\maRes{m}{P}} \equiv \maRes{m}{\maRes{n}{P}} \hspace*{2em}
%	P \mid \maRes{n}{Q} \equiv \maRes{n}{\left( P \mid Q \right)} \quad \text{ if } n \notin \FreeNames{P} $\\
%	$ \maRes{n}{\left( \maLoc{m}{P} \right)} \equiv \maLoc{m}{\maRes{n}{P}} \quad \text{ if } n \neq m $
%\end{center}

The reduction semantics of mobile ambients in \cite{cardelliGordon00} consists of the axioms
\begin{center}
	$ \maLoc{n}{\maIn{m}.P \mid Q} \mid \maLoc{m}{R} \step \maLoc{m}{\maLoc{n}{P \mid Q} \mid R} $ \vspace*{0.3em}\\
	$ \maLoc{m}{\maLoc{n}{\maOut{m}.P \mid Q} \mid R} \step \maLoc{n}{P \mid Q} \mid \maLoc{m}{R} \hspace*{2em} \maOpen{n}.P \mid \maLoc{n}{Q} \step P \mid Q $
\end{center}
and the rules:
\begin{center}
	$ \dfrac{P \step P'}{\maRes{n}{P} \step \maRes{n}{P'}} \hspace*{2em} \dfrac{P \step P'}{\maLoc{n}{P} \step \maLoc{n}{P'}} \hspace*{2em} \dfrac{P \step P'}{P \mid R \step P' \mid R} \hspace*{2em} \dfrac{P \equiv Q \quad \quad Q \step Q' \quad Q' \equiv P'}{P \step P'} $
\end{center}

The first axiom moves an ambient $ n $ with all its content (except for the consumed $ \maIn{m}{} $-capability) into a sibling ambient with name $ m $, where it is composed in parallel to the content of $ m $.
The second axiom allows an ambient $ n $ with all its content (except for the consumed $ \maOut{m}{} $-capability) to exit its parent ambient $ m $. As result ambient $ n $ is placed in parallel to $ m $.
The third axiom dissolves the boundary of an ambient named $ n $ that is located at the same level as the $ \maOpenAct $-capability.
The next three rules propagate reduction across scopes, ambient nesting, and parallel composition.
By the last rule reductions are defined modulo structural congruence.

Note that \cite{cardelliGordon00} explicitly states, that the same name can be used to name different ambients, \ie ambients with separate identities.
Moreover, if there are several ambients with the same name at the same hierarchical level all $ \maInAct $ and $ \maOpenAct $-capabilities that affect an ambient with this name can chose freely (non-deterministically) between the alternatives.

Following \cite{petersNestmannGoltz13}, we denote the operator $ !P $ for replication as recurrent, because (in contrast to the other operators) it is itself never removed during reductions.
Similarly, we denote an ambient that is not opened or moved in a step as recurrent for this step and, otherwise, as non-recurrent \wrt this step.
To distinguish between different occurrences of syntactically the same subterm in a term, we assume that all capabilities of processes in the following are implicitly labelled as described in \cite{petersNestmannGoltz13}.

\subsection{Encodings and Quality Criteria}
\label{sec:encodings}

Let $ \sourceLang = \ProcCal{\procSource}{\stepSource} $ and $ \targetLang = \ProcCal{\procTarget}{\stepTarget} $ be two process calculi, denoted as \emph{source} and \emph{target language}. An \emph{encoding} from $ \sourceLang $ into $ \targetLang $ is a function $ \arbitraryEncoding : \procSource \to \procTarget $. We often use $ S, S', S_1, \ldots $ to range over $ \procSource $ and $ T, T', T_1, \ldots $ to range over $ \procTarget $. Encodings often translate single source term steps into a sequence or pomset of target term steps. We call such a sequence or pomset an \emph{emulation} of the corresponding source term step.

To analyse the quality of encodings and to rule out trivial or meaningless encodings, they are augmented with a set of quality criteria. In order to provide a general framework, Gorla in \cite{gorla10} suggests five criteria well suited for language comparison. They are divided into two structural and three semantic criteria. The structural criteria include
\begin{inparaenum}[(1)]
	\item \emph{compositionality} and
	\item \emph{name invariance}. The semantic criteria include
	\item \emph{operational correspondence},
	\item \emph{divergence reflection}, and
	\item \emph{success sensitiveness}.
\end{inparaenum}
It turns out that we do not need the second criterion to derive the separation results of this paper. Thus, we omit it.
Note that a behavioural equivalence $ \asymp $ on the target language is assumed for the definition of name invariance and operational correspondence. %Its purpose is to describe the abstract behaviour of a target process, where abstract refers to the behaviour of the source term.
Moreover, let $ \varphi : \names \to \names^k $ be a \emph{renaming policy}, \ie a mapping from a name to a vector of names that can be used by encodings to reserve special names, such that no two different names are translated into overlapping vectors of names.

Intuitively, an encoding is compositional if the translation of an operator is the same for all occurrences of that operator in a term. Hence, the translation of that operator can be captured by a context that is allowed in \cite{gorla10} to be parametrised on the free names of the respective source term.

\begin{definition}[Compositionality, \cite{gorla10}]
	\label{def:compositionality}
	The encoding $ \arbitraryEncoding $ is \emph{compositional} if, for every operator $ \mathbf{op} : \names^n \times \procSource^m \to \procSource $ of $ \sourceLang $ and for every subset of names $ N $, there exists a context $ \Context{N}{\mathbf{op}}{\hole_1, \ldots , \hole_{n + m}} : \names^n \times \procSource^m \to \procTarget $ such that, for all $ x_1, \ldots, x_n \in \names $ and all $ S_1, \ldots, S_m \in \procSource $ with $ \FreeNames{S_1} \cup \ldots \cup \FreeNames{S_m} = N $, it holds that $ \ArbitraryEncoding{\mathbf{op}\left( x_1, \ldots, x_n, S_1, \ldots, S_m \right)} = \Context{N}{\mathbf{op}}{\varphi\!\left( x_1 \right), \ldots, \varphi\!\left( x_n \right), \ArbitraryEncoding{S_1}, \ldots, \ArbitraryEncoding{S_m}} $.
\end{definition}

The first semantic criterion is operational correspondence. It consists of a soundness and a completeness condition. \emph{Completeness} requires that every computation of a source term can be emulated by its translation. \emph{Soundness} requires that every computation of a target term corresponds to some computation of the corresponding source term.

\begin{definition}[Operational Correspondence, \cite{gorla10}]
	\label{def:operationalCorrespondence}
	The encoding $ \arbitraryEncoding $ satisfies \emph{operational correspondence} if it satisfies:
	\begin{center}
		{\tabcolsep3pt
		\begin{tabular}{ll}
			\emph{Completeness}: & For all $ S \stepsSource S' $, it holds $ \ArbitraryEncoding{S} \stepsTarget \asymp \ArbitraryEncoding{S'} $.\\
			\emph{Soundness}: & For all $ \ArbitraryEncoding{S} \stepsTarget T $, there exists an $ S' $ such that $ S \stepsSource S' $ and $ T \stepsTarget \asymp \ArbitraryEncoding{S'} $.
		\end{tabular}}
	\end{center}
\end{definition}

\noindent
The definition of operational correspondence relies on the equivalence $ \asymp $ to get rid of junk possibly left over within computations of target terms. Sometimes, we refer to the completeness criterion of operational correspondence as \emph{operational completeness} and, accordingly, for the soundness criterion as \emph{operational soundness}.

The next criterion concerns the role of infinite computations in encodings.

\begin{definition}[Divergence Reflection, \cite{gorla10}]
	\label{def:divergenceReflection}
	The encoding $ \arbitraryEncoding $ \emph{reflects divergence} if, for every source term $ S $, $ \ArbitraryEncoding{S} \stepTarget^{\omega} $ implies $ S \stepSource^{\omega} $.
\end{definition}

The last criterion links the behaviour of source terms to the behaviour of their encodings. With Gorla \cite{gorla10}, we assume a \emph{success} operator $ \success $ as part of the syntax of both the source and the target language. Since $ \success $ cannot be further reduced and $ \Names{\success} = \FreeNames{\success} = \BoundNames{\success} = \emptyset $, the semantics and structural congruence of a process calculus are not affected by this additional constant operator.
We choose may-testing to test for the reachability of success, \ie $ P\reachSuccess \deff \exists P' \logdot P \steps P' \wedge P'\hasSuccess $. However, this choice is not crucial. An encoding preserves the abstract behaviour of the source term if it and its encoding answer the tests for success in exactly the same way.

\begin{definition}[Success Sensitiveness, \cite{gorla10}]
	\label{def:successSensitiveness}
	The encoding $ \arbitraryEncoding $ is \emph{success-sensitive} if, for every source term $ S $, $ S\reachSuccess $ iff $ \ArbitraryEncoding{S}\reachSuccess $.
\end{definition}

\noindent
This criterion only links the behaviours of source terms and their literal translations, but not of their derivatives. To do so, Gorla relates success sensitiveness and operational correspondence by requiring that the equivalence on the target language never relates two processes with different success behaviours.

\begin{definition}[Success Respecting, \cite{gorla10}]
	\label{def:asympSuccessRespecting}
	$ \asymp $ is \emph{success respecting} if, for every $ P $ and $ Q $ with $ P\reachSuccess $ and $ Q\notReachSuccess $, it holds that $ P \not\asymp Q $.
\end{definition}

\noindent
By \cite{gorla10} a ``good'' equivalence $ \asymp $ is often defined in the form of a barbed equivalence (as described e.g. in \cite{milnerSangiorgi92}) or can be derived directly from the reduction semantics and is often a congruence, at least with respect to parallel composition. For the separation results presented in this paper, we require only that $ \asymp $ is a success respecting reduction bisimulation.

\begin{definition}[(Weak) Reduction Bisimulation]
	\label{def:asympReductionBisim}
	The equivalence $ \asymp $ is a \emph{(weak) reduction bisimulation} if, for every $ T_1, T_2 \in \procTarget $ such that $ T_1 \asymp T_2 $, for all $ T_1 \stepsTarget T_1' $ there exists a $ T_2' $ such that $ T_2 \stepsTarget T_2' $ and $ T_1' \asymp T_2' $.
\end{definition}

Note that the best known encoding from the asynchronous \piCal into the \joinCal in \cite{fournetGonthier96} is not compositional, but consists of an inner, compositional encoding surrounded by a fixed context|the implementation of so-called firewalls|that is parametrised on the free names of the source term. In order to capture this and similar encodings and as done in \cite{petersNestmannGoltz13} we relax the definition of compositionality in our notion of a good encoding.

\begin{definition}[Good Encoding]
	\label{def:goodEncoding}
	We consider an encoding $ \arbitraryEncoding $ to be \emph{good} if it is
	\begin{inparaenum}[(1)]
		\item either compositional or consists of an inner, compositional encoding surrounded by a fixed context that can be parametrised on the free names of the source term,
		\item satisfies operational correspondence,
		\item reflects divergence, and
		\item is success-sensitive.
	\end{inparaenum}
	Moreover we require that the equivalence $ \asymp $ is a success respecting (weak) reduction bisimulation.
\end{definition}

In this case a good encoding respects also the ability to reach success in all finite maximal executions.

\begin{lemma}[\cite{petersNestmannGoltz13TR}]
	\label{lem:mustSuccessRespecting}
	For all success respecting reduction bisimulations $ \asymp $ and all convergent target terms $ T_1, T_2 $ such that $ T_1 \asymp T_2 $, it holds $ T_1\mustReachSuccessFinite $ iff $ T_2\mustReachSuccessFinite $.
\end{lemma}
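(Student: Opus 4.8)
The plan is to avoid reasoning about individual (possibly success-passing) maximal executions directly and instead reduce $\mustReachSuccessFinite$ to a \emph{path-independent} reachability property that is manifestly preserved by $\asymp$. Concretely, I would first establish the characterisation that, for a convergent $T$,
\[
	T\mustReachSuccessFinite \quad\text{iff}\quad \text{every } T' \text{ with } T \steps T' \text{ satisfies } T'\reachSuccess .
\]
Since $\asymp$ is an equivalence it is symmetric, so it suffices to transport this property from $T_1$ to $T_2$; the converse is identical with the roles of $T_1$ and $T_2$ exchanged.

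As a preliminary observation I would record that $\hasSuccess$ is preserved under reduction: an unguarded occurrence of $\success$ is never guarded nor discarded by any reduction rule (reductions only consume guards and capabilities and rearrange or duplicate subterms), hence $P\hasSuccess$ together with $P \steps P'$ implies $P'\hasSuccess$. I also use that every derivative of a convergent term is convergent, so that all maximal executions under consideration are finite, of the form $P \steps W \noStep$.

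For the characterisation itself, the right-to-left direction is immediate: any maximal execution of the convergent $T$ ends in a terminal $W$ with $T \steps W$, the hypothesis gives $W\reachSuccess$, and since $W \noStep$ this forces $W\hasSuccess$, so the execution reaches success. For the other direction, given $T \steps T'$ I fix such a sequence and extend $T'$ to a terminal state $W$, obtaining a maximal execution of $T$; by $T\mustReachSuccessFinite$ some state $X$ along it is successful. If $X$ occurs at or after $T'$ then $T \steps T' \steps X$ gives $T'\reachSuccess$ directly, and if $X$ occurs strictly before $T'$ then $X \steps T'$ together with persistence yields $T'\hasSuccess$, again giving $T'\reachSuccess$. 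It then remains to show that the reachability property is $\asymp$-invariant: assuming it holds for $T_1$ and taking any $T_2 \steps T_2'$, I apply the reduction bisimulation in the direction $T_2 \asymp T_1$ to obtain $T_1 \steps T_1'$ with $T_1' \asymp T_2'$; the hypothesis gives $T_1'\reachSuccess$, and since $\asymp$ is success respecting (and symmetric) this carries over to $T_2'\reachSuccess$. Combining the characterisation for $T_1$ and $T_2$ with this invariance closes the iff.

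I expect the main obstacles to be two small but essential points. First, the persistence of $\hasSuccess$ under reduction is exactly what lets the second direction of the characterisation dismiss the case where success is reached strictly before $T'$; without it the argument breaks. Second, one must apply the bisimulation in the correct (symmetric) direction, so that the universally quantified reachability hypothesis on $T_1$ can be invoked for the witness $T_1'$ produced from a move of $T_2$. Convergence is used throughout to guarantee that maximal executions have terminal endpoints and must not be dropped.
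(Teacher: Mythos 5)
Your proof is correct and follows essentially the argument one would expect behind this lemma (the paper itself only cites \cite{petersNestmannGoltz13TR} and gives no proof): match a maximal execution via the reduction bisimulation, transfer $\reachSuccess$ of matched derivatives via success respecting, and use convergence to extend to terminal states. Your explicit reliance on persistence of $\hasSuccess$ under reduction is the one assumption not literally stated in the paper, but it holds for all target calculi considered here (unguarded $\success$ is never guarded or discarded by any reduction rule of \join, \MA, or the \piCal variants), so the characterisation of $\mustReachSuccessFinite$ as ``every derivative reaches success'' and hence the whole argument goes through.
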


Then success sensitiveness preserves the ability to reach success in all finite maximal executions.

\begin{lemma}[\cite{petersNestmannGoltz13TR}]
	\label{lem:mustSuccessSensitiveness}
	For all operationally sound, divergence reflecting, and success-sensitive encodings $ \arbitraryEncoding $ with respect to some success respecting equivalence $ \asymp $ and for all convergent source terms $ S $, if $ S\mustReachSuccessFinite $ then $ \ArbitraryEncoding{S}\mustReachSuccessFinite $.
\end{lemma}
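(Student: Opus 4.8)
The plan is to prove the claim one finite maximal execution of $\ArbitraryEncoding{S}$ at a time, reducing it to a statement about the final term of that execution. By Definition~\ref{def:divergenceReflection} and the convergence of $S$, the encoding $\ArbitraryEncoding{S}$ is convergent as well, so every maximal execution of $\ArbitraryEncoding{S}$ is finite, \ie of the shape $\ArbitraryEncoding{S} \stepsTarget T \noStep$. I would fix such an execution. Since $T \noStep$, the only maximal execution of $T$ is the trivial one, so $T$ reaches success iff $T\hasSuccess$; hence it suffices to prove $T\reachSuccess$, which then witnesses that the chosen execution reaches success.

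To get a handle on $T$, I would apply operational soundness (Definition~\ref{def:operationalCorrespondence}) to $\ArbitraryEncoding{S} \stepsTarget T$, obtaining a source term $S'$ with $S \stepsSource S'$ and $T \stepsTarget \asymp \ArbitraryEncoding{S'}$; as $T \noStep$, this collapses to $T \asymp \ArbitraryEncoding{S'}$. The next step is to show $S'\reachSuccess$. Since $S$ is convergent, so is its derivative $S'$, and therefore $S'$ admits a finite maximal execution $S' \stepsSource S'' \noStep$. Prepending $S \stepsSource S'$ yields a finite maximal execution $S \stepsSource S'' \noStep$ of $S$, which by $S\mustReachSuccessFinite$ reaches success, \ie some term $\bar{S}$ along it satisfies $\bar{S}\hasSuccess$. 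If $\bar{S}$ occurs at or after $S'$, then $S' \stepsSource \bar{S}$ already gives $S'\reachSuccess$; if $\bar{S}$ occurs strictly before $S'$, then persistence of unguarded success under reduction (which holds in the calculi under consideration, as $\success$ is a constant and no reduction rule can re-guard or delete an already unguarded occurrence) propagates $\bar{S}\hasSuccess$ along $\bar{S} \stepsSource S'$ to $S'\hasSuccess$, so again $S'\reachSuccess$.

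Finally I would transfer this back to the target. By success sensitiveness (Definition~\ref{def:successSensitiveness}), $S'\reachSuccess$ implies $\ArbitraryEncoding{S'}\reachSuccess$. Because $\asymp$ is a success-respecting (weak) reduction bisimulation (Definitions~\ref{def:asympReductionBisim} and~\ref{def:asympSuccessRespecting}) and $T \asymp \ArbitraryEncoding{S'}$, reachability of success transfers across $\asymp$: any execution of $\ArbitraryEncoding{S'}$ into a successful state is matched, by the bisimulation clause, by an execution of $T$ into an $\asymp$-related state, which by success respecting must itself reach success. Hence $T\reachSuccess$, and since $T \noStep$ this means $T\hasSuccess$, so the chosen execution reaches success. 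As the execution was arbitrary, $\ArbitraryEncoding{S}\mustReachSuccessFinite$ follows.

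The main obstacle is the case distinction in the middle step: operational soundness only delivers \emph{some} $S'$ with $S \stepsSource S'$ and $T \asymp \ArbitraryEncoding{S'}$, and a priori the success guaranteed by $S\mustReachSuccessFinite$ could lie on the prefix $S \stepsSource S'$ rather than beyond $S'$, so that $S'$ itself need not obviously reach success. Resolving this is exactly where persistence of unguarded success is required, and it is the one semantic property beyond the listed quality criteria that the argument leans on. Note that one could phrase the last step through Lemma~\ref{lem:mustSuccessRespecting} applied to the convergent pair $T \asymp \ArbitraryEncoding{S'}$, but that would require $\ArbitraryEncoding{S'}\mustReachSuccessFinite$, which is essentially the statement being proved for $S'$; transferring plain reachability of success via the success-respecting bisimulation is therefore the cleaner route.
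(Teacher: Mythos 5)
The paper itself gives no proof of this lemma---it is imported verbatim from \cite{petersNestmannGoltz13TR}---so there is nothing in-document to compare against; judged on its own, your argument is correct and is the expected one: divergence reflection turns convergence of $ S $ into convergence of $ \ArbitraryEncoding{S} $, operational soundness maps the endpoint $ T $ of an arbitrary finite maximal execution to some $ S' $ with $ S \stepsSource S' $ and $ T \asymp \ArbitraryEncoding{S'} $, and success sensitiveness together with the success-respecting property of $ \asymp $ transfers $ S'\reachSuccess $ to $ T\reachSuccess $ and hence, as $ T \noStep $, to $ T\hasSuccess $. The one point worth dwelling on is the one you flag yourself: concluding $ S'\reachSuccess $ from $ S\mustReachSuccessFinite $ when the successful state lies on the prefix $ S \stepsSource S' $ genuinely needs persistence of unguarded $ \success $ under reduction. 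This is not among the listed quality criteria, but it holds in every calculus the paper works with (in \piCal, \join and \MA no reduction deletes or re-guards a top-level occurrence of the constant $ \success $, since $ \success $ is never a guard and only guards are consumed), and the cited technical report leans on the same property, so this is a correctly identified side condition rather than a gap. Two minor simplifications: once $ T \noStep $, the final transfer needs only that $ \asymp $ is success respecting (its contrapositive directly gives $ T\reachSuccess $ from $ \ArbitraryEncoding{S'}\reachSuccess $), so the bisimulation clause you invoke there is not needed.
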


\subsection{Distributability and Synchronisation Pattern}
\label{sec:distributability}

Intuitively, a distribution of a process means the extraction (or: separation) of its (sequential) components and their association to different locations.
However, not all process calculi in the literature---as \eg the standard \piCal in \cite{milnerParrowWalker92}---consider locations explicitly. For the calculi without an explicit notion of location \cite{petersNestmannGoltz13} defines a general notion of \emph{distributability} that focuses on the possible division of a process term into components.
Accordingly, a process $ P $ is distributable into $ P_1 , \ldots , P_n $, if we find some distribution that extracts $ P_1 , \ldots , P_n $ from within $ P $ onto different locations.

\begin{definition}[Distributability, \cite{petersNestmannGoltz13}]
	\label{def:degreeOfDistributability}
	Let $ \ProcCal{\proc}{\step} $ be a process calculus, $ \equiv $ be its structural congruence, and $ P \in \proc $. $ P $ is \emph{distributable} into $ P_1, \ldots, P_n \in \proc $ if there exists $ P' \equiv P $ such that
	\begin{compactenum}
		\item for all $ 1 \leq i \leq n $, $ P_i $ contains at least one capability or constant different from $ \nullTerm $ and $ P_i $ is an unguarded subterm of $ P' $ or, in case $ \equiv $ is given by a chemical approach, $ \vdash P' \rightleftharpoons \mathcal{R} \vdash P_i, \mathcal{M} $ for some multisets $ \mathcal{R}, \mathcal{M} $,
		\item in $ P_1, \ldots, P_n $ there are no two occurrences of the same capability, \ie no label occurs twice, and
		\item each guarded subterm and each constant (different from $ \nullTerm $) of $ P' $ is a subterm of at least one of the terms $ P_1, \ldots, P_n $.
	\end{compactenum}
	The \emph{degree of distributability} of $ P $ is the maximal number of distributable subterms of $ P $.
\end{definition}

\noindent
Accordingly, a pi-term $ P $ is distributable into $ P_1, \ldots, P_n $ if $ P \equiv \Res{\tilde{a}}{\left( P_1 \mid \ldots \mid P_n \right)} $. The $ \procJoin $-term $ \JDef{\JIn{a}\,}{\nullTerm}{\left( \JDef{\JIn{b}\,}{\JOutput{c}{a}}{\left( \JOut{a} \mid \JOut{b} \right)} \right)} $ is distributable into $ \JDef{\JIn{a}\,}{\nullTerm}{\JOut{a}} $ and $ \JDef{\JIn{b}\,}{\JOutput{c}{a}}{\JOut{b}} $, but \eg also into $ \JDef{\JIn{a}\,}{\nullTerm}{\nullTerm} $, $ \JDef{\JIn{b}\,}{\JOutput{c}{a}}{\nullTerm} $, $ \JOut{a} $, and $ \JOut{b} $, because
$ \vdash \JDef{\JIn{a}\,}{\nullTerm}{\left( \JDef{\JIn{b}\,}{\JOutput{c}{a}}{\left( \JOut{a} \mid \JOut{b} \right)} \right)}
%\rightleftharpoons \JDefShort{\JIn{a}\,}{\nullTerm} \vdash \JDef{\JIn{b}\,}{\JOutput{c}{a}}{\left( \JOut{a} \mid \JOut{b} \right)}
\rightleftharpoons \JDefShort{\JIn{a}\,}{\nullTerm}, \JDefShort{\JIn{b}\,}{\JOutput{c}{a}} \vdash \JOut{a} \mid \JOut{b}
\rightleftharpoons \JDefShort{\JIn{a}\,}{\nullTerm}, \JDefShort{\JIn{b}\,}{\JOutput{c}{a}} \vdash \JOut{a}, \JOut{b}
%\rightleftharpoons \JDefShort{\JIn{a}\,}{\nullTerm} \vdash \JDef{\JIn{b}\,}{\JOutput{c}{a}}{\nullTerm}, \JOut{a}, \JOut{b}
\rightleftharpoons \; \vdash \JDef{\JIn{a}\,}{\nullTerm}{\nullTerm}, \JDef{\JIn{b}\,}{\JOutput{c}{a}}{\nullTerm}, \JOut{a}, \JOut{b} $.

Mobile ambients come with an explicit notion of locations: ambients.
A term of $ \procMA $ is distributable into pairwise intersected subsets of its outermost ambients.
Applying the Definition~\ref{def:degreeOfDistributability} results into exactly these distributable components.
Because of the rule $ !P \equiv P \mid !P $, the replication of an ambient, \eg by $ !(\maLoc{n}{P}) $ or $ !(\maRes{n}{\maLoc{n}{P}}) $, is a distributable recurrent operation.

Preservation of distributability  means that the target term is at least as distributable as the source~term.

\begin{definition}[Preservation of Distributability, \cite{petersNestmannGoltz13}]
	\label{def:distributabilityPreservation}
	An encoding $ \arbitraryEncoding : \procSource \to \procTarget $ \emph{preserves distributability} if for every $ S \in \procSource $ and for all terms $ S_1, \ldots, S_n \in \procSource $ that are distributable within $ S $ there are some $ T_1, \ldots, T_n \in \procTarget $ that are distributable within $ \ArbitraryEncoding{S} $ such that $ T_i \asymp \ArbitraryEncoding{S_i} $ for all $ 1 \leq i \leq n $.
\end{definition}

In essence, this requirement is a distributability-enhanced adaptation of operational completeness.
It respects both the intuition on distribution as separation on different locations---an encoded source term is at least as distributable as the source term itself---as well as the intuition on distribution as independence of processes and their executions---implemented by $ T_i \asymp \ArbitraryEncoding{S_i} $.

If a single process---of an arbitrary process calculus---can perform two different steps, \ie steps on capabilities with different labels, then we call these steps alternative to each other. Two alternative steps can either be in conflict or not; in the latter case, it is possible to perform both of them in parallel, according to some assumed step semantics.

\begin{definition}[Distributable Steps, \cite{petersNestmannGoltz13}]
	\label{def:distributableSteps}
	Let $ \ProcCal{\proc}{\step} $ be a process calculus and $ P \in \proc $ a process. Two alternative steps of $ P $ are in \emph{conflict}, if performing one step disables the other step, \ie if both reduce the same not recurrent capability. Otherwise they are \emph{parallel}.
	Two parallel steps of $ P $ are \emph{distributable}, if each recurrent capability reduced by both steps is distributable, else the steps are \emph{local}.
\end{definition}

\noindent
Remember that the ``same'' means ``with the same label'', \ie in $ \left( \maOpen{n} \mid \maLoc{n}{P_1} \mid \maLoc{n}{P_2} \right) $ the two steps that open one of the ambients $ n $ are in conflict but $ \left( \maOpen{n} \mid \maLoc{n}{P_1} \mid \maOpen{n} \mid \maLoc{n}{P_2} \right) $ can perform two parallel steps---using different $ \maOpen $-capabilities and ambients---to open both ambients $ n $.

Next we define parallel and distributable sequences of steps.

\begin{definition}[Distributable Executions, \cite{petersNestmannGoltz13}]
	\label{def:distributableSequences}
	Let $ \ProcCal{\proc}{\step} $ be a process calculus, $ P \in \proc $, and let $ A $ and $ B $ denote two executions of $ P $. $ A $ and $ B $ are in \emph{conflict}, if a step of $ A $ and a step of $ B $ are in conflict, else $ A $ and $ B $ are \emph{parallel}.
	Two parallel sequences of steps $ A $ and $ B $ are \emph{distributable}, if each pair of a step of $ A $ and a step of $ B $ is distributable.
\end{definition}

Two executions of a term $ P $ are distributable iff $ P $ is distributable into two subterms such that each performs one of these executions. Hence, an operationally complete encoding is distributability-preserving only if it preserves the distributability of sequences of source term steps.

\begin{lemma}[Distributability-Preservation, \cite{petersNestmannGoltz13}]
	\label{lem:distributabilityPreservation}
	An operationally complete encoding $ \arbitraryEncoding : \procSource \to \procTarget $ that preserves distributability also preserves distributability of executions, \ie for all source terms $ S \in \procSource $ and all sets of pairwise distributable executions of $ S $, there exists an emulation of each execution in this set such that all these emulations are pairwise distributable in $ \ArbitraryEncoding{S} $.
\end{lemma}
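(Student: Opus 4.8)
The plan is to reduce the statement about executions to the already-available characterisation of distributability of \emph{terms} and then to push the three hypotheses---operational completeness, preservation of distributability, and the reduction-bisimulation property of $\asymp$---through one after another. First I would fix a source term $S$ and a set of pairwise distributable executions $E_1, \ldots, E_n$ of $S$. Using the characterisation of distributable executions stated just before the lemma, but generalised from two to $n$ executions, I would obtain a term $S' \equiv S$ that is distributable (Definition~\ref{def:degreeOfDistributability}) into subterms $S_1, \ldots, S_n$ such that each $E_i$ is performed entirely within $S_i$, \ie each $E_i$ is a reduction $S_i \stepsSource S_i'$. The point to check here is that pairwise distributability really yields a \emph{simultaneous} distribution into $n$ components: the non-recurrent capabilities consumed along different $E_i$ carry pairwise-disjoint labels, and every recurrent capability used by more than one execution lies in a distributable recurrent operator, so all $n$ components can be extracted at once rather than only pairwise.

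Next I would invoke preservation of distributability (Definition~\ref{def:distributabilityPreservation}) on this distribution to obtain target terms $T_1, \ldots, T_n$ that are distributable within $\ArbitraryEncoding{S}$ and satisfy $T_i \asymp \ArbitraryEncoding{S_i}$ for every $i$. Independently, operational completeness (Definition~\ref{def:operationalCorrespondence}) applied to each $S_i \stepsSource S_i'$ yields an emulation $\ArbitraryEncoding{S_i} \stepsTarget T_i^\ast \asymp \ArbitraryEncoding{S_i'}$. To turn these into emulations inside $\ArbitraryEncoding{S}$ I would relocate each one from the literal image $\ArbitraryEncoding{S_i}$ into the corresponding component $T_i$: since $T_i \asymp \ArbitraryEncoding{S_i}$ and $\asymp$ is a (weak) reduction bisimulation (Definition~\ref{def:asympReductionBisim}), the reduction $\ArbitraryEncoding{S_i} \stepsTarget T_i^\ast$ is matched by some $T_i \stepsTarget T_i'$ with $T_i' \asymp T_i^\ast \asymp \ArbitraryEncoding{S_i'}$, so $T_i \stepsTarget T_i'$ is again an emulation of $E_i$. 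As $T_i$ is an unguarded component of a term structurally congruent to $\ArbitraryEncoding{S}$, this reduction stays confined to the $T_i$-component when performed in place, hence is a reduction of $\ArbitraryEncoding{S}$.

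Finally I would establish that these $n$ emulations are pairwise distributable inside $\ArbitraryEncoding{S}$, which is where the distributability of $T_1, \ldots, T_n$ pays off: by Definition~\ref{def:degreeOfDistributability} no capability label occurs in two different components, so emulations confined to $T_i$ and $T_j$ with $i \neq j$ never consume the same non-recurrent capability and share only distributable recurrent capabilities; thus any step of the one and any step of the other are parallel and distributable by Definition~\ref{def:distributableSteps}, and the whole reductions are distributable executions by Definition~\ref{def:distributableSequences}. I expect the main obstacle to be exactly this bridging between the two levels: operational completeness only supplies emulations inside the literal images $\ArbitraryEncoding{S_i}$, whereas distributability is a property of $\ArbitraryEncoding{S}$ as a whole, and it is the reduction-bisimulation hypothesis that relocates each emulation into the distributable component $T_i \asymp \ArbitraryEncoding{S_i}$; the $n$-ary (rather than merely pairwise) extraction of the $S_i$ in the first step is the other place that needs care.
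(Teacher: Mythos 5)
Your proposal is correct and follows essentially the same route the paper takes (the paper only sketches the argument, citing \cite{petersNestmannGoltz13}): it reduces distributability of executions to distributability of the term into components, then applies Definition~\ref{def:distributabilityPreservation} and operational completeness, using the reduction-bisimulation property of $\asymp$ to relocate each emulation into its component. Your explicit attention to the $n$-ary (rather than merely pairwise) extraction and to the transfer from $\ArbitraryEncoding{S_i}$ to $T_i$ fills in exactly the details the paper leaves implicit.
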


As described in the introduction, we consider a process calculus is distributable iff it does not contain a non-local \patternM.

\begin{definition}[Synchronisation Pattern \patternM, \cite{petersNestmannGoltz13}]
	\label{def:synchronisationPatternM}
	Let $ \ProcCal{\proc}{\step} $ be a process calculus and $ \PM \in \proc $ such that:
	\begin{compactenum}
		\item $ \PM $ can perform at least three alternative steps $ a\!\!: \PM \step P_a $, $ b\!\!: \PM \step P_b $, and $ c\!: \PM \step P_c $ such that $ P_a $, $ P_b $, and $ P_c $ are pairwise different.
		\item The steps $ a $ and $ c $ are parallel in $ \PM $.
		\item But $ b $ is in conflict with both $ a $ and $ c $.
	\end{compactenum}
	In this case, we denote the process $ \PM $ as \patternM.
	If the steps $ a $ and $ c $ are distributable in $ \PM $, then we call the \patternM \emph{non-local}. Otherwise, the \patternM is called \emph{local}.
\end{definition}

As shown in \cite{petersNestmannGoltz13}, all \patternM in the \joinCal (\join) are local but the asynchronous \piCal (\piAsyn) contains the non-local \patternM: $ \Output{y}{u} \mid \Input{y}{x}.P_1 \mid \Output{y}{v} \mid \Input{y}{x}.P_2 $ with $ P_1, P_2 \in \procAsyn $, where the steps $ a $, $ b $, and $ c $ are the reduction of the first out- and input, the first input and the second output, and the second out- and input, respectively.
Because of that, there is no good and distributability-preserving encoding from \piAsyn into \join.
To further distinguish different variants of the \piCal, \cite{petersNestmannGoltz13} introduces a second synchronisation pattern called \patternGreatM.
Interestingly, it reflects a well-known standard problem in the area of distributed systems, namely the problem of the dining philosophers \cite{dijkstra71}.

\begin{definition}[Synchronisation Pattern \patternGreatM, \cite{petersNestmannGoltz13}]
	\label{def:synchronisationPatternGreatM}
	Let $ \ProcCal{\proc}{\step} $ be a process calculus and $ \PS \in \proc $ such that:
	\begin{compactenum}
		\item $ \PS $ can perform at least five alternative reduction steps $ i : \PS \step P_i $ for $ i \in \Set{ a, b, c, d, e } $ such that the $ P_i $ are pairwise different.
		\item The steps $ a $, $ b $, $ c $, $ d $, and $ e $ form a circle such that $ a $ is in conflict with $ b $, $ b $ is in conflict with $ c $, $ c $ is in conflict with $ d $, $ d $ is in conflict with $ e $, and $ e $ is in conflict with $ a $. Finally,
		\item every pair of steps in $ \Set{ a, b, c, d, e } $ that is not in conflict due to the previous condition is parallel in $ \PS $.
	\end{compactenum}
	In this case, we denote the process $ \PS $ as \patternGreatM. The synchronisation pattern \patternGreatM is visualised by the Petri net in Figure~\ref{fig:greatMInPetri}~(c). If all pairs of parallel steps in $ \Set{ a, b, c, d, e } $ are distributable in $ \PS $, then we call the \patternGreatM \emph{non-local}. Otherwise, it is called \emph{local}.
\end{definition}

\noindent
Note that we need at least four steps in this cycle, to have two steps that are distributable, and a cycle of odd degree to distinguish different variants of the \piCal. Accordingly, the \patternGreatM is the smallest structure with these requirements.
To see the connection with the dining philosophers problem, consider the places in Figure~\ref{fig:greatMInPetri}~(c) as the chopsticks of the philosophers, \ie as resources, and the transitions as eating operations, \ie as steps consuming resources. Each step needs mutually exclusive access to two resources and each resource is shared among two subprocesses. If both resources are allocated simultaneously, eventually exactly two steps are performed.

\cite{petersNestmannGoltz13} then shows that the asynchronous \piCal (\piAsyn) and also the \piCal with separate choice (\piSep) do not contain the pattern \patternGreatM, whereas the standard \piCal (\piMix) with mixed choice has
% can express the pattern
\patternGreatM.

\begin{example}[Non-Local \patternGreatM in \piMix]
	\label{exa:patternGreatMInPiMix}
	Consider a term $ \piS{\piSE{1}, \ldots, \piSE{5}} \in \procMix $ such that
	\begin{align*}
		\piS{\piSE{1}, \ldots, \piSE{5}} = \Out{a} + b.\piSE{1} \mid \Out{b} + c.\piSE{2} \mid \Out{c} + d.\piSE{3} \mid \Out{d} + e.\piSE{4} \mid \Out{e} + a.\piSE{5}
	\end{align*}
	for some $ \piSE{1}, \ldots, \piSE{5} \in \Set{ \nullTerm, \success } $. Then, $ \piS{\piSE{1}, \ldots, \piSE{5}} $ can perform the steps $ a $, \ldots, $ e $, where the step $ i \in \Set{ a, \ldots, e } $ is a communication on channel $ i $.
	By Definition~\ref{def:synchronisationPatternGreatM}, $ \piS{\piSE{1}, \ldots, \piSE{5}} $ is a non-local \patternGreatM.
\end{example}

Actually, the above term $ \piS{\piSE{1}, \ldots, \piSE{5}} $ is a \patternGreatM in CCS with mixed choice, because for this counterexample the communication of values was not relevant. Adding (unused) values to the communication prefixes is straight forward.
By using the \patternGreatM-pattern $ \piS{\piSE{1}, \ldots, \piSE{5}} $ as counterexample, \cite{petersNestmannGoltz13, petersNestmannGoltz13TR} shows that there is no good and distributability-preserving encoding from \piMix into \piSep (or \piAsyn).

\section{Mobile Ambients are not Distributable}
\label{sec:MinMA}

Similar to the \joinCal, mobile ambients were designed in order to be distributed (or distributable), where ambients were introduced as an explicit representation of locations. But in opposite to the \joinCal there are non-local \patternM in mobile ambients, \ie some form of synchronisation between ambients.
\begin{example}[Non-Local \patternM in Mobile Ambients.]
	\label{exa:nonLocalMMA}
	Consider the $ \procMA $-term
	\begin{align*}
		& \maM = \left( \maOpen{n_1} \mid \maLoc{n_1}{P_1} \right) \mid \left( \maLoc{n_1}{\maIn{n_2}.P_2} \mid \maLoc{n_2}{P_3} \right)
	\end{align*}
	with $ P_1, P_2, P_3 \in \procMA $. $ \maM $ can perform modulo structural congruence the steps
	\begin{itemize}
		\item $ a $: $ \maM \step P_1 \mid \left( \maLoc{n_1}{\maIn{n_2}.P_2} \mid \maLoc{n_2}{P_3} \right) $
		\item $ b $: $ \maM \step \maLoc{n_1}{P_1} \mid \maIn{n_2}.P_2 \mid \maLoc{n_2}{P_3} $
		\item $ c $: $ \maM \step \left( \maOpen{n_1} \mid \maLoc{n_1}{P_1} \right) \mid \left( \maLoc{n_2}{\maLoc{n_1}{P_2} \mid P_3} \right) $
	\end{itemize}
	Here, the steps $ a $ and $ b $ compete for the non-recurrent $ \maOpenAct $-capability. The steps $ b $ and $ c $ compete for the right ambient $ n_1 $ that is non-recurrent in both steps. Hence, both of these pairs of steps are in conflict, while the pair of steps $ a $ and $ c $ is distributable. Thus $ \maM $ is a non-local \patternM.
\end{example}

Similar to the proof, that there is no good and distributability-preserving encoding from \piAsyn into \join, we use this $ \maM $ as a counterexample to show that there is no good and distributability-preserving encoding from \MA into \join.
Therefore, we instantiate the processes $ P_1 $, $ P_2 $, and $ P_3 $ such that the conflicting step $ b $ can be distinguished by success from the distributable steps $ a $ and $ c $.
We choose $ P_1 = \maLoc{n_3}{} $, $ P_2 = \maIn{n_3}.\success $, and $ P_3 = \maOpen{n_1} $, such that $ \maM $ reaches success iff the steps $ a $ and $ c $ are performed.
\begin{example}[Counterexample]
	The non-local \patternM
	\begin{align*}
		\maMS = \left( \maOpen{n_1} \mid \maLoc{n_1}{\maLoc{n_3}{}} \right) \mid \left( \maLoc{n_1}{\maIn{n_2}.\maIn{n_3}.\success} \mid \maLoc{n_2}{\maOpen{n_1}} \right)
	\end{align*}
	reaches success iff $ \maMS $ performs both of the distributable steps $ a $ and $ c $, where
	\begin{itemize}
		\item $ a $: $ \maMS \step S_a $ with $ S_a = \maLoc{n_3}{} \mid \left( \maLoc{n_1}{\maIn{n_2}.\maIn{n_3}.\success} \mid \maLoc{n_2}{\maOpen{n_1}} \right) $ and $ S_a\mustReachSuccessFinite $,
		\item $ b $: $ \maMS \step S_b $ with $ S_b = \maLoc{n_1}{\maLoc{n_3}{}} \mid \maIn{n_2}.\maIn{n_3}.\success \mid \maLoc{n_2}{\maOpen{n_1}} $ and $ S_b\notReachSuccess $, and
		\item $ c $: $ \maMS \step S_c $ with $ S_c = \left( \maOpen{n_1} \mid \maLoc{n_1}{\maLoc{n_3}{}} \right) \mid \maLoc{n_2}{\maLoc{n_1}{\maIn{n_3}.\success} \mid \maOpen{n_1}} $ and $ S_c\mustReachSuccessFinite $.
	\end{itemize}
\end{example}

Any good encoding that preserves distributability has to translate $ \maMS $ such that the emulations of the steps $ a $ and $ c $ are again distributable. However, the encoding can translate these two steps into sequences of steps, which allows to emulate the conflicts with the emulation of $ b $ by two different distributable steps. We show that every distributability-preserving encoding has to distribute $ b $ and, afterwards, that this distribution of $ b $ violates the criteria of a good encoding.

\begin{lemma}
	\label{lem:splitUpB}
	Every encoding $ \arbitraryEncoding : \procMA \to \procJoin $ that is good and distributability-preserving has to split up the conflict in $ \maMS $ of $ b $ with $ a $ and $ c $ such that there exists a maximal execution in $ \ArbitraryEncoding{\maMS} $ in which $ a $ is emulated but not $ c $, and vice versa.
\end{lemma}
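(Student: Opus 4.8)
The plan is to produce emulations of the three source steps, to use the success predicate to pin down which of them may and may not be combined in a target computation, and then to exploit that the \joinCal is fully distributed in order to force a genuine splitting of the emulation of $b$.

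First I would invoke operational completeness (Definition~\ref{def:operationalCorrespondence}) to obtain emulations $E_a$, $E_b$, $E_c$ with $\ArbitraryEncoding{\maMS} \stepsTarget T_x \asymp \ArbitraryEncoding{S_x}$ for $x \in \Set{a,b,c}$. Since $a$ and $c$ are distributable in $\maMS$, Lemma~\ref{lem:distributabilityPreservation} lets me choose $E_a$ and $E_c$ to be distributable in $\ArbitraryEncoding{\maMS}$. All of $\maMS$, $S_a$, $S_b$, $S_c$ are convergent, so the success machinery applies: from $S_a\mustReachSuccessFinite$ and $S_c\mustReachSuccessFinite$, Lemma~\ref{lem:mustSuccessSensitiveness} gives $\ArbitraryEncoding{S_a}\mustReachSuccessFinite$ and $\ArbitraryEncoding{S_c}\mustReachSuccessFinite$, and Lemma~\ref{lem:mustSuccessRespecting} transfers this to $T_a$ and $T_c$; dually, $S_b\notReachSuccess$ together with success sensitiveness and the fact that $\asymp$ is success respecting yields that $T_b$, and every target state $\asymp\ArbitraryEncoding{S_b}$, never reaches success. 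Hence reaching a state $\asymp\ArbitraryEncoding{S_a}$ (or $\asymp\ArbitraryEncoding{S_c}$) and reaching a state $\asymp\ArbitraryEncoding{S_b}$ are incompatible; in particular, once $E_b$ has been completed neither $a$ nor $c$ can still be emulated, as otherwise a must-successful state would be reachable from the success-free $T_b$.

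Next I would argue that the disabling of $a$ and the disabling of $c$ inside $E_b$ occur at distinct, independently schedulable target steps. The leverage is that the \joinCal is distributed, i.e.\ it contains no non-local \patternM in the sense of Definition~\ref{def:synchronisationPatternM} (recalled from \cite{petersNestmannGoltz13}). Suppose towards a contradiction that a single step $t$ of $E_b$ were in conflict both with a step of $E_a$ and with a step of $E_c$. As $E_a$ and $E_c$ are distributable, these two opponent steps are distributable to each other, so $t$ together with them would constitute a non-local \patternM in $\ArbitraryEncoding{\maMS}$, which is impossible in the \joinCal. Therefore the conflict of $E_b$ with $E_a$ and the conflict of $E_b$ with $E_c$ are realised by different steps of $E_b$, which can be scheduled independently. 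Consequently there is a target state reachable from $\ArbitraryEncoding{\maMS}$ by a prefix of $E_b$ in which the emulation of $c$ is already disabled while the emulation of $a$ is still enabled, and symmetrically a state in which $a$ is disabled while $c$ is enabled. From the former I complete $E_a$, reaching a state $\asymp\ArbitraryEncoding{S_a}$, and extend to a maximal execution that emulates $a$ but not $c$; the symmetric construction gives a maximal execution emulating $c$ but not $a$, which is exactly the claimed splitting.

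The main obstacle is this middle step: rigorously transferring the source conflicts and the ``disabling'' relation down to the target step level via operational soundness, and aligning the opponent steps of $E_a$ and $E_c$ so that the absence of a non-local \patternM in the \joinCal really applies to the hypothetical single commitment step $t$. In particular, after ruling out a single commitment point one must still check that the intermediate state genuinely leaves the other emulation enabled, rather than blocked by some unforeseen interference; this is where operational soundness and the convergence of all involved terms have to be used with care.
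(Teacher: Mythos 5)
Your proposal is correct and follows essentially the same route as the paper, which itself only sketches the argument and defers the details to the proof of Lemma~4 in \cite{petersNestmannGoltz13, petersNestmannGoltz13TR}: distributability preservation makes the emulations of $a$ and $c$ distributable, the conflicts of $b$ with $a$ and with $c$ must be reflected by conflicts of the emulations, and a single committing step of the emulation of $b$ would yield a non-local \patternM in \join, which is impossible, so the two conflicts are decided by distinct, independently schedulable steps. Your use of the success predicate (via Lemmas~\ref{lem:mustSuccessRespecting} and~\ref{lem:mustSuccessSensitiveness}) to establish mutual exclusion of the emulations matches the role the counterexample's instantiation plays in the referenced proof.
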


In \cite{petersNestmannGoltz13} we show a similar result for all encodings from \piAsyn into \join (Lemma~4 in \cite{petersNestmannGoltz13}) using a counterexample E1.
Since the counterexample $ \maMS $ in \MA is in its properties very similar to the counterexample E1 of \cite{petersNestmannGoltz13}, the proof of Lemma~\ref{lem:splitUpB} is exactly the same as the proof of Lemma~4 in \cite{petersNestmannGoltz13} as presented in \cite{petersNestmannGoltz13TR}.
The main idea of this proof is as follows:
Any good encoding that preserves distributability has to translate $ \maMS $ such that the emulations of the steps $ a $ and $ c $ are again distributable. Moreover any good encoding has to translate the conflicts between $ a $ and $ b $ as well as between $ b $ and $ c $ into conflicts between the respective emulations. This either leads to a non-local \patternM again or it results into an emulation of $ b $ with at least two steps such that the conflicts with the emulation of $ b $ are emulated by two different steps. Next we show that this distribution of the conflict violates the criteria of a good encoding with respect to the considered source language, \ie \wrt our counterexample $ \maMS $ and an adaptation of this example.

Also the proof that there is no good and distributability-preserving encoding from \MA into \join is very similar to the proof for the non-existence of such an encoding from \piAsyn into \join in \cite{petersNestmannGoltz13, petersNestmannGoltz13TR}.

\begin{theorem}
	\label{thm:noGoodEncodingMA}
	There is no good and dis\-tri\-bu\-ta\-bi\-li\-ty-pre\-ser\-ving encoding from \MA into \join.
\end{theorem}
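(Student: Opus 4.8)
The plan is to argue by contradiction, reusing the structure of the corresponding proof for \piAsyn into \join from \cite{petersNestmannGoltz13,petersNestmannGoltz13TR}. I assume a good, distributability-preserving encoding $\arbitraryEncoding : \procMA \to \procJoin$ exists and work with the non-local \patternM $\maMS$ together with its steps $a$, $b$, $c$ and derivatives $S_a$, $S_b$, $S_c$ from the counterexample above. Since $\maMS$ is convergent, divergence reflection makes $\ArbitraryEncoding{\maMS}$ convergent as well, so every maximal execution of $\ArbitraryEncoding{\maMS}$ is finite and ends in a stuck term. By Lemma~\ref{lem:splitUpB} the encoding has to split the conflict of $b$ with $a$ and $c$: there is a maximal execution of $\ArbitraryEncoding{\maMS}$ that emulates $a$ but not $c$, and symmetrically one that emulates $c$ but not $a$.

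Next I would exploit distributability. By Lemma~\ref{lem:distributabilityPreservation} the distributable source steps $a$ and $c$ have distributable emulations in $\ArbitraryEncoding{\maMS}$, sitting at two independent parts of the target term; call these locations $L_a$ and $L_c$. Because $b$ is in conflict with both $a$ and $c$, any emulation of $b$ must disable both emulations, and by the split from Lemma~\ref{lem:splitUpB} it does so through two \emph{different} steps, one at $L_a$ (resolving the $a$-versus-$b$ conflict) and one at $L_c$ (resolving the $c$-versus-$b$ conflict). Since $L_a$ and $L_c$ are distributable, these two resolving steps are themselves distributable and can be taken independently. The critical execution resolves the conflict at $L_a$ in favour of $b$, thereby disabling the emulation of $a$, while at $L_c$ it fully emulates $c$, thereby disabling the $b$-step at $L_c$. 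This execution completes neither $a$ nor $b$ but emulates $c$, so by operational soundness it reaches a stuck $T$ with $T \asymp \ArbitraryEncoding{S_c}$.

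I would then turn the success decorations against this execution. Because $S_c$ is convergent with $S_c\mustReachSuccessFinite$, Lemma~\ref{lem:mustSuccessSensitiveness} yields $\ArbitraryEncoding{S_c}\mustReachSuccessFinite$, and Lemma~\ref{lem:mustSuccessRespecting} transports this along $\asymp$ to the convergent $T$, so $T\mustReachSuccessFinite$; as $T$ is stuck this forces $T\hasSuccess$. Thus the critical execution reaches success even though the emulation of $a$ has been permanently disabled. In $\maMS$, however, reaching $\success$ is possible only after the $\maOpenAct$-effect of step $a$ has dissolved the left ambient $n_1$ and exposed $\maLoc{n_3}{}$ at top level, so that the ambient $n_2$ carrying $\maIn{n_3}.\success$ can finally enter $n_3$; no source computation from $S_c$ that avoids this effect is successful. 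To turn this mismatch into an outright contradiction with success sensitiveness I would, exactly as in \cite{petersNestmannGoltz13TR}, pass to an adaptation of $\maMS$ that keeps the \patternM structure (so that Lemma~\ref{lem:splitUpB} still applies) but relocates $\success$ so that the side on which $b$ is resolved is the only one that may succeed; the forced split then produces a maximal execution whose must-success verdict is the opposite of the one dictated by operational soundness and success sensitiveness.

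The main obstacle is the step hidden in the second paragraph: establishing that resolving the $b$-conflict on one distributable side genuinely and irreversibly disables that side's emulation, so that the two independent resolutions at $L_a$ and $L_c$ cannot be re-synchronised into a single coherent emulation of $b$. This is precisely what makes $\maMS$ behave like the counterexample E1 of \cite{petersNestmannGoltz13}, and it is why the argument can be taken over essentially verbatim: one instantiates the holes $P_1,P_2,P_3$ (as already done for $\maMS$ via $P_1=\maLoc{n_3}{}$, $P_2=\maIn{n_3}.\success$, and $P_3=\maOpen{n_1}$) together with the success-swapped adaptation, so that the distributed resolution of $b$ is detected by the reachability of $\success$. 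Everything else is the routine combination of operational correspondence, divergence reflection, and the two must-success Lemmas~\ref{lem:mustSuccessRespecting} and~\ref{lem:mustSuccessSensitiveness}.
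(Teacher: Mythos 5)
Your overall strategy matches the paper's: assume a good and distributability-preserving encoding, invoke Lemma~\ref{lem:splitUpB} to obtain a maximal execution that emulates exactly one of the two distributable steps and (by success sensitiveness) reaches success, and then derive a contradiction. But there is a genuine gap at the decisive final step. You correctly note that an execution reaching success while emulating only one of $a$ and $c$ is not by itself a violation of any criterion (after all, $\maMS\reachSuccess$ holds anyway), and that one must ``pass to an adaptation of $\maMS$'' --- yet you neither construct that adaptation nor describe it correctly. The paper does not relocate $\success$; it keeps the counterexample intact except for the capability enabling step $c$: from $S_2 = \maLoc{n_1}{\maIn{n_2}.\maIn{n_3}.\success} \mid \maLoc{n_2}{\maOpen{n_1}}$ it forms $S_2' = \maLoc{n_1}{\maOut{n_2}.\maIn{n_3}.\success} \mid \maLoc{n_2}{\maOpen{n_1}}$, so that $\FreeNames{S_2} = \FreeNames{S_2'}$, steps $a$ and $b$ survive, but $S_1 \mid S_2'$ can never unguard $\success$. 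The reason the free names must be preserved is compositionality --- a criterion your proposal never invokes --- which forces $\ArbitraryEncoding{S_1 \mid S_2'} = \Context{}{}{\ArbitraryEncoding{S_1}, \ArbitraryEncoding{S_2'}}$ with the \emph{same} context as for $\ArbitraryEncoding{\maMS}$, so the success-reaching execution that emulates $a$ without touching the translation of the $c$-capability can be replayed there. This yields $\ArbitraryEncoding{S_1 \mid S_2'}\reachSuccess$ while $\left( S_1 \mid S_2' \right)\notReachSuccess$, a direct violation of may-success sensitiveness rather than the mismatch of must-success verdicts you aim for. Without the concrete adaptation and without compositionality, your sketch never actually closes the contradiction.

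A secondary weakness: your second paragraph re-derives material belonging to the proof of Lemma~\ref{lem:splitUpB} (the two distributable steps resolving the conflict of $b$ at the two locations) rather than to the theorem itself, and the inference that the critical execution ``reaches a stuck $T$ with $T \asymp \ArbitraryEncoding{S_c}$'' is too quick: operational soundness only gives $T \asymp \ArbitraryEncoding{S'}$ for \emph{some} source derivative $S'$, and identifying $S'$ with $S_c$ (rather than a further derivative of it) needs an argument you do not supply. Since the actual contradiction is deferred entirely to the unconstructed adaptation, the proof as written is incomplete.
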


\begin{proof}%[Proof of Theorem~\ref{thm:noGoodEncodingMA}]
	Assume the opposite.
	Then there is a good and distributability-preserving encoding of $ \maMS $.
	By the proof of Lemma~\ref{lem:splitUpB}, there is a maximal execution of $ \ArbitraryEncoding{\maMS} $ in that $ a $ but not $ c $ is emulated or vice versa. Since $ S_a\mustReachSuccessFinite $ and $ S_c\mustReachSuccessFinite $ and because of success sensitiveness, the corresponding emulation leads to success. So there is an execution such that the emulation of $ a $ leads to success without the emulation of $ c $ or vice versa. Let us assume that $ a $ but not $ c $ is emulated. The other case is similar.

	For encodings with respect to the relaxed definition of compositionality in Definition~\ref{def:goodEncoding}, there exists a context $ \Context{\hole_1, \hole_2} : \procJoin^2 \to \procJoin $---the combination of the surrounding context and the context introduced by compositionality (Definition~\ref{def:compositionality})---such that $ \ArbitraryEncoding{\maMS} = \Context{}{}{\ArbitraryEncoding{S_1}, \ArbitraryEncoding{S_2}} $, where $ S_1, S_2 \in \procMA $ with $ S_1 = \maOpen{n_1} \mid \maLoc{n_1}{\maLoc{n_3}{}} $ and $ S_2 = \maLoc{n_1}{\maIn{n_2}.\maIn{n_3}.\success} \mid \maLoc{n_2}{\maOpen{n_1}} $.
	Let $ S_2' = \maLoc{n_1}{\maOut{n_2}.\maIn{n_3}.\success} \mid \maLoc{n_2}{\maOpen{n_1}} $.
	Since $ \FreeNames{S_2} = \FreeNames{S_2'} $, also $ S_1 \mid S_2' $ has to be translated by the same context, \ie $ \ArbitraryEncoding{S_1 \mid S_2'} = \Context{}{}{\ArbitraryEncoding{S_1}, \ArbitraryEncoding{S_2'}} $.
	$ \maMS $ and $ S_1 \mid S_2' $ differ only by a capability necessary for step $ c $, but step $ a $ and $ b $ are still possible.
	We conclude, that if $ \Context{}{}{\ArbitraryEncoding{S_1}, \ArbitraryEncoding{S_2}} $ reaches some $ T_a\mustReachSuccessFinite $ without the emulation of $ c $, then $ \Context{}{}{\ArbitraryEncoding{S_1}, \ArbitraryEncoding{S_2'}} $ reaches at least some state $ T_a' $ such that $ T_a'\reachSuccess $. Hence, $ \ArbitraryEncoding{S_1 \mid S_2'}\reachSuccess $ but $ \left( S_1 \mid S_2' \right)\notReachSuccess $ which contradicts success sensitiveness.
\end{proof}

\noindent
Note that the only differences in the proof above and the proof for the the non-existence of a good and distributability-preserving encoding from \piAsyn into \join in \cite{petersNestmannGoltz13TR} are the due to the different counterexample and the corresponding choice of its adaptation with $ S_2' $.

\section{Conflicts in Mobile Ambients}
\label{sec:conflictsInMA}

Both of the above-defined synchronisation patterns rely on the notion of conflict.
In mobile ambients, the same ambient can be considered as recurrent in one step, but non-recurrent in another step. This fact, \ie the existence of operators that are recurrent for some but non-recurrent for other steps, distinguishes mobile ambients from all other calculi considered in \cite{petersNestmannGoltz13} and generates a new notion of conflict.
\begin{example}[Asymmetric Conflict]
	Consider the mobile ambient term:
	\begin{align*}
		P = \maLoc{n_1}{\maIn{n_2}} \mid \maLoc{n_2}{\maIn{n_3}} \mid \maLoc{n_3}{}
	\end{align*}
	$ P $ can perform two alternative steps
	\begin{itemize}
		\item $ s_1: P \step P_1 $ with $ P_1 = \maLoc{n_2}{\maLoc{n_1}{} \mid \maIn{n_3}} \mid \maLoc{n_3}{} $ and
		\item $ s_2: P \step P_2 $ with $ P_2 = \maLoc{n_1}{\maIn{n_2}} \mid \maLoc{n_3}{\maLoc{n_2}{}} $
	\end{itemize}
	that both use the ambient $ n_2 $ (but no other operator is used in both steps). In $ s_1 $, the ambient $ n_2 $ is a recurrent capability but in $ s_2 $ the ambient $ n_2 $ is moved and, thus, is non-recurrent. Accordingly, $ s_2 $ disables $ s_1 $, \ie $ P_2 \noStep $, but not vice versa, \ie $ P_1 $ can perform the step $ s_2 $ such that $ P_1 \step \maLoc{n_3}{\maLoc{n_2}{\maLoc{n_1}{}}} $.
\end{example}
Accordingly, we denote a conflict as \emph{symmetric} if the steps compete for an operator that is non-recurrent in both, \ie if both steps disable the respective other step, and otherwise as \emph{asymmetric}.
The example above can be extended to a cyclic structure of odd degree. The term
\begin{align*}
	\maLoc{a}{\maIn{b}} \mid \maLoc{b}{\maIn{c}} \mid \maLoc{c}{\maIn{d}} \mid \maLoc{d}{\maIn{e}} \mid \maLoc{e}{\maIn{a}}
\end{align*}
even satisfies Definition~\ref{def:synchronisationPatternGreatM}, \ie it describes a non-local \patternGreatM, if we were to relax in the required conflicts in Definition~\ref{def:synchronisationPatternGreatM} by requiring only asymmetric conflicts. However, because of the asymmetric conflicts within this structure, it can be encoded much more easily than a \patternGreatM with symmetric conflicts.
This is also reflected by the fact that in the proofs for the separation result between \piMix and \piAsyn in \cite{petersNestmannGoltz13} we have to rely on the mutually exclusive nature of the conflicts in the \patternGreatM of the counterexample $ \piS{\piSE{1}, \ldots, \piSE{5}} $. Accordingly, we cannot use an \patternM or a \patternGreatM with asymmetric conflicts to derive separation results as done above. Instead, we show that, despite of the \patternGreatM with asymmetric conflicts, mobile ambients can be separated from \piMix by the synchronisation pattern \patternGreatM, because they cannot express a \patternGreatM with symmetric conflicts.

It turns out that the symmetric conflict in the pattern \patternM of the step $ b $ with $ a $ and $ c $ as given in Example~\ref{exa:nonLocalMMA} can only be expressed with an $ \maOpenAct $-action.

\begin{lemma}
	\label{lem:onesidedConflictM}
	Let $ P \in \procMA $ be an \patternM.
	Then one of the two conflicts is asymmetric or the step $ b $ reduces an $ \maOpenAct $-action.
\end{lemma}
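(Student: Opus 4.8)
The plan is to prove the contrapositive: assuming that both conflicts of the $\patternM$ are symmetric, I would show that the step $b$ must reduce an $\maOpenAct$-action. The starting point is a precise inventory of the non-recurrent operators of each kind of step. An $\maOpenAct$-step $\maOpen{n}.Q \mid \maLoc{n}{R} \step Q \mid R$ makes exactly two operators non-recurrent: the consumed $\maOpen{n}$-capability and the opened ambient $n$. An in-step, and symmetrically an out-step, say with consumed capability $M \in \Set{ \maIn{m}, \maOut{m} }$, likewise makes exactly two operators non-recurrent: the capability $M$ and the ambient it moves. The decisive observation is that for in- and out-steps these two are rigidly linked---the moved ambient is precisely the unique ambient directly enclosing $M$, so \emph{any} step that reduces $M$ necessarily moves this same ambient. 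By contrast, the ambient opened by an $\maOpenAct$-step is a sibling chosen non-deterministically and is therefore \emph{not} fixed by the $\maOpen$-capability; this asymmetry between the rules is what I expect to drive the whole result.

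First I would assume, towards a contradiction, that both conflicts are symmetric and that $b$ is \emph{not} an $\maOpenAct$-step, so $b$ is an in- or out-step with consumed capability $M$ and some moved ambient $k$. By Definition~\ref{def:distributableSteps}, the symmetric conflict between $a$ and $b$ means that $a$ and $b$ reduce a common operator that is non-recurrent in both; since $b$'s only non-recurrent operators are $M$ and $k$, this shared operator is either $M$ or $k$. If it is $M$, then $a$ reduces $M$ as well, and by the rigid link above $a$ moves $k$, so $k$ is non-recurrent in $a$; if it is $k$, then $k$ is non-recurrent in $a$ directly. Either way, a symmetric conflict of the in/out-step $b$ with another step forces $k$ to be non-recurrent in that other step. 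Applying this to both conflicts, $k$ is non-recurrent in $a$ and in $c$ simultaneously.

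This is exactly the contradiction I am after: if $k$ is non-recurrent in both $a$ and $c$, then performing either step removes $k$ from its position (by moving or opening it) and thereby disables the other, so $a$ and $c$ are in conflict---contradicting condition~(2) of Definition~\ref{def:synchronisationPatternM}, which requires $a$ and $c$ to be parallel. Hence $b$ can be neither an in- nor an out-step, \ie $b$ reduces an $\maOpenAct$-action, which is the remaining disjunct of the claim. The hard part, I expect, will be justifying the inventory of non-recurrent operators and, above all, the rigid capability--ambient link for in/out-steps: one must check that reducing an in/out-capability leaves no freedom in which ambient is moved, and that the labelling convention together with the rules of $\equiv$---in particular $!P \equiv P \mid !P$---does not silently supply an alternative mover. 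It is precisely the failure of this rigidity for $\maOpen$ (where the capability and the opened ambient can be shared by two different steps that open two \emph{different} ambients, as in Example~\ref{exa:nonLocalMMA}) that leaves the open-action as the genuine exception.
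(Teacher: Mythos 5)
Your proposal is correct and follows essentially the same route as the paper's proof: both identify the consumed capability and the moved ambient as the only non-recurrent operators of an $\maInAct$/$\maOutAct$-step, observe that any symmetric conflict with such a step must therefore go through one of these two and in either case forces the moved ambient to be non-recurrent in the conflicting step, and conclude that this ambient would be non-recurrent in both $a$ and $c$, contradicting their parallelism. Your unified treatment via the ``rigid link'' between capability and moved ambient merely merges the paper's separate case analyses (its Case~1/Case~2 for $\maInAct$ and the tree-structure argument for $\maOutAct$) into one observation.
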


\begin{proof}
	Assume the contrary, \ie assume there is some $ P \in \procMA $ such that $ P $ is an \patternM with two symmetric conflicts, where the step $ b $ reduces either an $ \maInAct $-action or an $ \maOutAct $-action.

	A step on an $ \maInAct $-action involves an $ \maIn{n} $-capability and two ambients: the ambient $ m $ that is moved by ``$ \maIn{n} $'', \ie in this, ``$ \maIn{n} $'' is located at top-level, and the ambient $ n $ to which $ m $ is moved.
	In this step, $ n $ is recurrent but $ m $ is non-recurrent.
	In order to obtain a symmetric conflict between $ b $ and $ a $ as well as $ c $, the later two steps have to consume the ``$ \maIn{n} $'' by providing an alternative $ n $ ambient on the same level in the ambient hierarchy (Case~1) or to move or open the ambient $ m $ (Case~2).
	\begin{compactitem}
		\item[Case~1:] Since $ a $ and $ c $ are parallel, at most one of them can consume ``$ \maIn{n} $''. Since this step moves the ambient $ m $, $ m $ is a non-recurrent capability of this step. Then the respective other step has to move or open $ m $ (Case~2). Hence, again, $ m $ is a non-recurrent capability of this step. As result in both of the steps $ a $ and $ c $, the same $ m $ is reduced as non-recurrent capability, which contradicts the assumption that $ a $ and $ c $ are parallel. Thus, neither $ a $ nor $ c $ can reduce ``$ \maIn{n} $''.
		\item[Case~2:] By Case~1, both of the steps $ a $ and $ c $ have to move or open the ambient $ m $. But then $ m $ is a non-recurrent capability of $ a $ and $ c $, which contradicts the assumption that $ a $ and $ c $ are parallel.
	\end{compactitem}
	Hence $ b $ cannot reduce an $ \maInAct $-action.

	A step on an $ \maOutAct $-action involves an $ \maOut{n} $-capability and two ambients: the ambient $ m $ that is moved by ``$ \maOut{n} $'', \ie in that ``$ \maOut{n} $'' is located at top-level, and the ambient $ n $ that surrounds $ m $ and out of which $ m $ is moved.
	In this step, $ n $ is recurrent, but $ m $ is non-recurrent.
	The ambient hierarchy forms a tree-structure. Thus, each ambient has at most one parent ambient.
	As a consequence, in order to obtain a symmetric conflict between $ b $ and $ a $ as well as $ c $, the later two steps have both to move or open the ambient $ m $. But then, $ m $ is a non-recurrent capability of $ a $ and $ c $, which contradicts the assumption that $ a $ and $ c $ are parallel. Hence, $ b $ cannot reduce an $ \maOutAct $-action.
\end{proof}

Since the synchronisation pattern \patternGreatM consists of several cyclic overlapping \patternM, all five steps of a \patternGreatM in mobile ambients have to reduce an $ \maOpenAct $-capability or at least one of the conflicts is asymmetric.
However, five steps on $ \maOpenAct $-capabilities cannot be combined in a cycle of odd degree. Thus, in all \patternGreatM-like structures there is at least one asymmetric conflict.
But there are no \patternGreatM (without asymmetric conflicts) in mobile ambients.

\begin{lemma}
	\label{lem:onesidedConflictGreatM}
	For all \patternGreatM-like structures $ P \in \procMA $ one of the conflicts in $ P $ that exist according to Definition~\ref{def:synchronisationPatternGreatM} is asymmetric.
\end{lemma}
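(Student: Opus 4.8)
The plan is to argue by contradiction: assume that all five conflicts of $P$ required by Definition~\ref{def:synchronisationPatternGreatM} are symmetric, and derive that five $\maOpenAct$-steps cannot be arranged in a conflict cycle of odd length.

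First I would localise the problem via Lemma~\ref{lem:onesidedConflictM}. In the conflict cycle $a$--$b$--$c$--$d$--$e$--$a$, every step is the middle step of an embedded \patternM formed with its two cycle-neighbours: for instance $b$ is in conflict with both $a$ and $c$, while $a$ and $c$ are non-adjacent in the cycle and hence parallel by condition~3 of Definition~\ref{def:synchronisationPatternGreatM}, and the three derivatives are pairwise different. The same holds cyclically for each $i \in \Set{a,b,c,d,e}$. Under the assumption that no conflict is asymmetric, Lemma~\ref{lem:onesidedConflictM} then forces every middle step to reduce an $\maOpenAct$-action, so that all five steps reduce $\maOpenAct$-capabilities.

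Next I would classify the possible symmetric conflicts between $\maOpenAct$-steps. Such a step reduces exactly two non-recurrent operators: the consumed $\maOpen$-capability and the dissolved ambient. By Definition~\ref{def:distributableSteps} a symmetric conflict requires the two steps to compete for a non-recurrent operator common to both; since a capability can never coincide with an ambient, the only options are that the two steps share their $\maOpen$-capability (as in $\maOpen{n} \mid \maLoc{n}{} \mid \maLoc{n}{}$) or share their opened ambient (as in $\maOpen{n} \mid \maOpen{n} \mid \maLoc{n}{}$), not both, since sharing both would make them the same step and contradict the distinctness of the derivatives. This assigns to each of the five conflict edges a unique type, either capability-sharing or ambient-sharing.

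The crux is then a parity argument on the cycle. At any step $i$ the two incident conflicts cannot both be capability-sharing: as $i$ carries a single $\maOpen$-capability, this would identify the capabilities reduced by its two cycle-neighbours, so those neighbours would reduce the same non-recurrent capability and hence be in conflict with each other, contradicting that they are parallel. Symmetrically, the two incident conflicts cannot both be ambient-sharing. Hence the two edges meeting at each vertex must carry different types, i.e.\ the two types would have to alternate around the cycle; but a cycle of odd length five admits no such proper two-colouring of its edges, the desired contradiction. I expect this final combinatorial impossibility to be the only real obstacle, the preceding steps being routine applications of Lemma~\ref{lem:onesidedConflictM} and of the definition of a symmetric conflict.
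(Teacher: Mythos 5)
Your proposal is correct and follows essentially the same route as the paper's proof: reduce to Lemma~\ref{lem:onesidedConflictM} via the embedded \patternM's to force all five steps to be $\maOpenAct$-steps, observe that each symmetric conflict must be either capability-sharing or ambient-sharing, and then derive the contradiction from the fact that the two types must alternate around a cycle of odd length. The paper phrases the final step as an explicit WLOG chain around the cycle rather than as a two-colouring parity argument, but the content is identical.
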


\begin{proof}
	Assume the contrary, \ie assume $ P $ is a \patternGreatM such that the conflicts between $ a $ and $ b $, $ b $ and $ c $, $ c $ and $ d $, $ d $ and $ e $, and $ e $ and $ a $ are all symmetric conflicts.
	Then, by Lemma~\ref{lem:onesidedConflictM} and the Definitions~\ref{def:synchronisationPatternM} and \ref{def:synchronisationPatternGreatM}, all steps in $ \Set{ a, \ldots, c } $ reduce an $ \maOpenAct $-action. Hence each of these steps reduces exactly one $ \maOpenAct $-capability and exactly one ambient.
	Accordingly, two neighbouring steps compete for either an $ \maOpenAct $-capability or the ambient that is opened.
	Without loss of generality assume that $ a $ and $ b $ compete for an $ \maOpenAct $-capability. Then, since $ a $ and $ c $ are parallel, $ b $ and $ c $ have to compete for an ambient. Again, since $ b $ and $ d $ are parallel, $ c $ and $ d $ have to compete for an $ \maOpenAct $-capability. Then $ d $ and $ e $ have to compete for an ambient. Then $ e $ and $ a $ have to compete for an $ \maOpenAct $-capability. But then $ a $ and $ b $ cannot compete for an $ \maOpenAct $-capability, because $ e $ and $ b $ are parallel. The other case is similar.
\end{proof}

A \patternGreatM with an asymmetric conflict cannot be extended to a \patternGreatM that can be used as counterexample similarly to $ \piS{\piSE{1}, \ldots, \piSE{5}} $ in \cite{petersNestmannGoltz13, petersNestmannGoltz13TR}.
The proof to separate \piMix from \piSep and \piAsyn in \cite{petersNestmannGoltz13, petersNestmannGoltz13TR} exploits the fact that every maximal execution of \patternGreatM contains exactly two distributable steps of the five alternative steps that form the \patternGreatM.
But, if we replace a conflict in the \patternGreatM by an asymmetric conflict, then three steps are possible in one execution.

\begin{lemma}
	\label{lem:threeSteps}
	All \patternGreatM-like structures $ P \in \procMA $ have an execution that executes three of the five alternative steps that exist according to Definition~\ref{def:synchronisationPatternGreatM}.
\end{lemma}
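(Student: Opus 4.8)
The plan is to derive the three-step execution directly from the single asymmetric conflict guaranteed by Lemma~\ref{lem:onesidedConflictGreatM}. First I would invoke that lemma to fix a pair of consecutive steps of $P$ whose conflict is asymmetric. By the cyclic symmetry of Definition~\ref{def:synchronisationPatternGreatM} (its three conditions are invariant under the rotation $a \to b \to c \to d \to e \to a$), I may assume without loss of generality that this asymmetric conflict is the one between $a$ and $b$.

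Next I would isolate a triple of steps whose only internal conflict is the asymmetric one. In the conflict cycle $a$-$b$-$c$-$d$-$e$-$a$ the unique step that is non-adjacent to both $a$ and $b$ is $d$. By the third condition of Definition~\ref{def:synchronisationPatternGreatM}, the pairs $a$/$d$ and $b$/$d$ are therefore parallel, while $a$/$b$ is the asymmetric conflict. Hence among $\Set{a, b, d}$ the step $d$ is parallel to both other steps and the only obstruction to performing all three is the asymmetric conflict between $a$ and $b$.

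Then I would schedule these three steps so as to avoid the disabling direction. Since the conflict between $a$ and $b$ is asymmetric, exactly one of them disables the other; assume without loss of generality that $a$ disables $b$ but $b$ does not disable $a$. I would exhibit the execution that performs $b$ first, then $d$, then $a$: performing $b$ does not disable $a$ (asymmetry) and does not disable $d$ (parallelism), so $a$ and $d$ remain enabled; performing $d$ does not disable $a$ (parallelism); finally $a$ is performed. This is an execution of $P$ executing three of the five alternative steps, as required. The symmetric case (where $b$ disables $a$) is handled by performing $a$, then $d$, then $b$, and the case of an asymmetric conflict on any other edge follows by the rotation above.

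The step I expect to require the most care is the claim that $a$ is still enabled after the prefix $b,d$, \ie that ``does not disable'' is preserved along the intermediate parallel step $d$. Here I would appeal to the characterisation of parallel steps in Definition~\ref{def:distributableSteps}: parallel steps reduce disjoint non-recurrent capabilities, so neither $b$ nor $d$ removes a non-recurrent capability required by $a$; together with the enabledness of $a$ in $P$ this keeps $a$ available throughout the prefix. For mobile ambients one should additionally verify that the moves performed by $b$ and $d$ do not destroy the structural side conditions of $a$ (the sibling position required by an $\maInAct$- or $\maOutAct$-step, or the presence of the opened ambient required by an $\maOpenAct$-step); this holds precisely because the ambients and capabilities consumed by $b$ and $d$ are disjoint from those on which $a$ depends, which is exactly what the parallelism of $a$/$d$ and the non-disabling direction of the $a$/$b$ conflict express.
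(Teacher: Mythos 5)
Your proposal is correct and follows essentially the same route as the paper's proof: invoke Lemma~\ref{lem:onesidedConflictGreatM} to obtain an asymmetric conflict, schedule its two steps in the non-disabling order, and add the unique step of the cycle that is parallel to both. Your explicit scheduling ($b$, then $d$, then $a$) and the closing check that enabledness survives the intermediate parallel step merely spell out details the paper leaves implicit.
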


\begin{proof}%[Proof of Lemma~\ref{lem:threeSteps}]
	By Lemma~\ref{lem:onesidedConflictGreatM}, all \patternGreatM in mobile ambients have an asymmetric conflict. Because of that, whenever some $ \piS{\hole_a, \ldots, \hole_e} : \procMA^5 \to \procMA $ is such that for all $ \piSE{1}, \ldots, \piSE{5} \in \Set{ \maNull, \success } $ the term $ \piS{\piSE{1}, \ldots, \piSE{5}} $ is a \patternGreatM except for asymmetric conflicts, then there is a maximal execution of $ \piS{\piSE{1}, \ldots, \piSE{5}} $ that contains three steps of the set $ \Set{ a, \ldots, e } $: the two steps that are related by the asymmetric conflict (executing first the step that is not in conflict to the other and then the one-sided conflicting step) and the step that is in parallel to both of the former neighbouring steps.
\end{proof}

To show that there is no good and distributability-preserving encoding from \piMix into \MA we proceed as in \cite{petersNestmannGoltz13, petersNestmannGoltz13TR}.
First, we observe that every conflict in our counterexample $ \piS{\piSE{1}, \ldots, \piSE{5}} $ has to be translated into conflicts of the respective emulations in mobile ambients.

\begin{lemma}
	\label{lem:translateConflicts}
	Any good and distributability-preserving encoding $ \arbitraryEncoding : \procMix \to \procMA $ has to translate the conflicts in $ \piS{\piSE{1}, \ldots, \piSE{5}} $ into conflicts of the corresponding emulations.
\end{lemma}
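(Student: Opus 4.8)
The plan is to prove the statement by contradiction, treating a single representative conflict and appealing to the cyclic symmetry of $ \piS{\piSE{1}, \ldots, \piSE{5}} $ for the remaining four. Writing $ S $ for $ \piS{\piSE{1}, \ldots, \piSE{5}} $ and rotating the roles of the five parallel components, it suffices to show that the emulations of the steps $ a $ and $ c $\,--\,which here I specialise to the steps $ a $ and $ b $ competing for the choice $ \Out{a} + b.\piSE{1} $ in the first component\,--\,are in conflict in $ \ArbitraryEncoding{S} $. First I would fix a separating instantiation that distinguishes the two derivatives by success: set $ \piSE{5} = \success $ and $ \piSE{1} = \piSE{2} = \piSE{3} = \piSE{4} = \nullTerm $. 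Firing $ a $ unguards the continuation $ \piSE{5} $, so the $ a $-derivative $ S_a $ is already successful; as an unguarded $ \success $ is persistent under $ \step $ and $ S_a $ is convergent, $ S_a\mustReachSuccessFinite $. Firing $ b $ instead consumes the whole choice in the first component and removes the only output $ \Out{a} $; since this output can never be regenerated, the guard of $ \piSE{5} $ in the fifth component can never be passed, so the $ b $-derivative satisfies $ S_b\notReachSuccess $. By the cyclic symmetry of $ S $ an analogous instantiation works for each of the five conflicts.

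Next I would push these success properties through the encoding. Operational completeness yields target computations $ \ArbitraryEncoding{S} \stepsTarget T_a \asymp \ArbitraryEncoding{S_a} $ and $ \ArbitraryEncoding{S} \stepsTarget T_b \asymp \ArbitraryEncoding{S_b} $ emulating $ a $ resp.\ $ b $. From $ S_a\mustReachSuccessFinite $ together with convergence of $ S_a $, Lemma~\ref{lem:mustSuccessSensitiveness} gives $ \ArbitraryEncoding{S_a}\mustReachSuccessFinite $; divergence reflection and the reduction bisimulation guarantee that $ T_a $ is convergent, so Lemma~\ref{lem:mustSuccessRespecting} transports this along $ \asymp $ to $ T_a\mustReachSuccessFinite $. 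Dually, $ S_b\notReachSuccess $ yields $ \ArbitraryEncoding{S_b}\notReachSuccess $ by success sensitiveness, and because $ \asymp $ is success respecting, $ T_b\notReachSuccess $.

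Now assume for contradiction that the emulations of $ a $ and $ b $ are \emph{not} in conflict, i.e.\ parallel in the sense of Definition~\ref{def:distributableSequences}: no step of the $ a $-emulation reduces the same non-recurrent capability as a step of the $ b $-emulation. Then the $ b $-emulation stays enabled after the $ a $-emulation and vice versa, and by the diamond property for pairwise non-conflicting steps the two sequences commute to a common derivative $ T^{*} $, reachable both from $ T_a $ (by appending the $ b $-emulation) and from $ T_b $ (by appending the $ a $-emulation), with the two endpoints structurally congruent and hence related by $ \asymp $. Since $ T_a\mustReachSuccessFinite $, $ T_a $ is convergent, and unguarded success is persistent in $ \MA $, the derivative $ T^{*} $ of $ T_a $ still satisfies $ T^{*}\reachSuccess $. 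But $ T^{*} $ is also a derivative of $ T_b $, so $ T^{*}\notReachSuccess $\,--\,contradicting that $ \asymp $ is success respecting. Hence the two emulations must be in conflict, and the same argument applied to each rotation establishes the lemma.

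The step I expect to be the main obstacle is the commutation/diamond claim: justifying that two emulation sequences whose individual steps are pairwise non-conflicting reach a common state up to $ \equiv $, and therefore up to $ \asymp $. This needs an independence (residual) argument for mobile ambients\,--\,lifting the single-step diamond for parallel steps to whole sequences and checking that performing one emulation never destroys a capability required by the other\,--\,combined with the transfer of $ \mustReachSuccessFinite $ across $ \asymp $ and the persistence of unguarded success under $ \step $. This is precisely the delicate point at which the proof mirrors the corresponding lemma in \cite{petersNestmannGoltz13TR}; the bookkeeping with the instantiation, by contrast, is routine.
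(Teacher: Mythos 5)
Your argument is correct and follows essentially the same route as the paper, which itself gives no explicit proof but defers to the corresponding lemma for encodings from \piMix into \piSep in \cite{petersNestmannGoltz13TR}: assume the emulations of two conflicting steps were parallel, perform both (using the interleaving/commutation property that the framework's notion of parallel executions provides), and derive a contradiction from success sensitiveness and the success-respecting reduction bisimulation $ \asymp $ via an instantiation of the holes with $ \success $ and $ \nullTerm $. The only points you leave implicit---that compositionality (with $ \Names{\nullTerm} = \Names{\success} = \emptyset $) transfers the conflict established for your chosen instantiation to all instantiations of $ \piS{\hole_a, \ldots, \hole_e} $, and that the final clash is really the direct contradiction $ T^{*}\reachSuccess \wedge T^{*}\notReachSuccess $ rather than a violation of success respecting---are routine and match how the cited proof proceeds.
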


The proof of this Lemma is exactly the same as the proof for the corresponding Lemma for encodings from \piMix into \piSep in \cite{petersNestmannGoltz13TR} but using the lemmas above, because this proof relies on the encodability criteria and the abstract notion of conflicts that is the same for \piSep and \MA.
Note that this proof assumes an encoding that satisfies compositionality as defined in Definition~\ref{def:compositionality}, but, as already stated in \cite{petersNestmannGoltz13}, it also holds in case of the relaxed version of compositionality that is used here.
Then, similar to Lemma~\ref{lem:splitUpB}, we show that each good encoding of the counterexample requires that a conflict has to be distributed.

\begin{lemma}
	\label{lem:splitGreatMAM}
	Any good and distributability-preserving encoding $ \arbitraryEncoding : \procMix \to \procMA $ has to split up at least one of the conflicts in $ \piS{\piSE{1}, \ldots, \piSE{5}} $ (or in $ \piS{\hole_a, \ldots, \hole_e} $) such that there exists a maximal execution of $ \ArbitraryEncoding{\piS{\hole_a, \ldots, \hole_e}} $ that emulates only one source term step, \ie unguards exactly one of the five holes.
\end{lemma}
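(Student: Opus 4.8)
The plan is to argue by contradiction from the impossibility of a ``clean'' $\patternGreatM$ in $\procMA$, following the template of Lemma~\ref{lem:splitUpB} and its $\piMix$-into-$\piSep$ analogue from \cite{petersNestmannGoltz13TR}. I fix the counterexample of Example~\ref{exa:patternGreatMInPiMix}; since the hole contents ($\maNull$ or $\success$) do not affect the step structure, the argument is uniform in the holes of $\piS{\hole_a, \ldots, \hole_e}$. Two features of the source $\patternGreatM$ survive any good, distributability-preserving encoding: by Lemma~\ref{lem:translateConflicts} each of the five cyclic source conflicts ($a$ with $b$, $b$ with $c$, \ldots, $e$ with $a$) is reflected as a conflict between the corresponding emulations, and by Lemma~\ref{lem:distributabilityPreservation} each of the five non-adjacent (hence parallel) source pairs has \emph{distributable} emulations.

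On top of this I would set up the central dichotomy. For each emulation $E_i$ there are two conflicts to realise, one with each cyclic neighbour. Either \textbf{(i)} for every $i$ both conflicts are resolved by one and the same step of $E_i$, or \textbf{(ii)} for some $i$ they sit on two \emph{different} steps of $E_i$ -- the latter being exactly what it means to \emph{split} a conflict. To rule out case~(i) I would first drive $\ArbitraryEncoding{\piS{\piSE{1}, \ldots, \piSE{5}}}$ by its administrative, non-deciding steps to a reachable state at which all five conflict-deciding steps are simultaneously enabled; this is possible because, under the assumption of case~(i), the non-deciding steps of the five emulations never resolve a conflict and the non-adjacent emulations are distributable, so these preliminary steps can be scheduled without committing any source step. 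At that state the five deciding steps together with the distributability of the non-adjacent pairs constitute a single-step $\patternGreatM$-like structure $\PS \in \procMA$ in the sense of Definition~\ref{def:synchronisationPatternGreatM}.

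The contradiction in case~(i) is then structural. By Lemma~\ref{lem:onesidedConflictGreatM}, $\PS$ must contain an asymmetric conflict, and by Lemma~\ref{lem:threeSteps} it therefore admits an execution performing three of its five deciding steps (the two related by the asymmetric conflict, taken in the enabling order, plus the one parallel to both). As each deciding step commits the emulation of a distinct source step and, by~(i), blocks that emulation's conflicts with both neighbours, operational soundness would require a reachable derivative of $\piS{\piSE{1}, \ldots, \piSE{5}}$ in which three of its five pairwise-conflicting steps have been performed. But these five steps form a conflict-cycle of length five, whose independence number is two, so no derivative performs more than two of them -- a contradiction. Hence case~(i) is impossible and case~(ii) holds: at least one conflict is split.

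It remains to turn a split into the promised maximal execution. Taking, say, the conflicts of $E_b$ with $E_a$ and with $E_c$ on two distinct steps, I would schedule the emulations so that the resulting non-atomic, partial commitments propagate around the cycle until the term deadlocks with exactly one emulation completed, \ie with exactly one of the five holes unguarded; this is the same scheduling argument carried out in the proof of Lemma~\ref{lem:splitUpB}. I expect this final construction to be the main obstacle: one must check against the concrete reduction semantics of $\procMA$ and the cyclic blocking pattern that the reached configuration is genuinely maximal and that \emph{precisely} one source step -- neither zero nor two -- has been emulated. A secondary delicate point is the passage, via operational soundness and the target equivalence $\asymp$, from ``three deciding steps taken'' to ``three conflicting source steps performed'', which must be carried out with the same bookkeeping as in the referenced proof.
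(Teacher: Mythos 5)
Your skeleton matches the paper's: invoke Lemma~\ref{lem:translateConflicts} and distributability preservation, split on whether each emulation realises both of its neighbouring conflicts in one and the same step, kill that case by an odd-cycle obstruction, and read off the split otherwise. But the two places where you depart from the paper are exactly where the gaps sit. Your route through Lemmas~\ref{lem:onesidedConflictGreatM} and~\ref{lem:threeSteps} needs a reachable target state in which all five conflict-deciding steps are \emph{simultaneously} enabled as alternative steps of a single term; you assert this can be arranged by scheduling the ``administrative'' steps first, but nothing in case~(i) guarantees that those steps commute to a common state (pairwise parallelism does not by itself give confluence, and case~(i) only concentrates each emulation's conflicts on one step, it does not exclude other interferences). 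The paper never builds this global state. It first rules out asymmetric conflicts between neighbouring emulations by success sensitiveness: such a conflict yields a maximal execution unguarding two adjacent holes $\hole_x, \hole_y$, and instantiating $\piSE{y} = \success$ and the rest with $\nullTerm$ then contradicts that $\asymp$ respects success. With all conflicts symmetric it applies Lemma~\ref{lem:onesidedConflictM} \emph{locally} to each triple: each single deciding step must trade an $\maOpenAct$-capability against an ambient, so the type of competition must alternate around the cycle, which is impossible for odd length. Your three-steps/independence-number argument, even granting the simultaneous state, still needs the unproved bridge from ``three deciding steps performed'' to ``three holes unguarded'' before operational soundness or success can bite; you flag this yourself but it is not bookkeeping, it is the same commitment argument the paper makes explicit in the two-step case.

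More seriously, the lemma's actual conclusion lives entirely in your case~(ii), where you only gesture at ``the same scheduling argument as Lemma~\ref{lem:splitUpB}''. That reference does not suffice: Lemma~\ref{lem:splitUpB} concerns a three-step \patternM and yields only an execution emulating $a$ but not $c$, whereas here you must produce a maximal execution unguarding \emph{exactly one} of five holes. The paper's endgame does real work: it fixes a triple $X, Y, Z$ with $X, Z$ distributable and hence $Y$'s two conflict steps distributable, decides the two conflicts inconsistently so that $X$ completes while $Y$ and $Z$ are both blocked, and then argues separately about the two remaining emulations --- $X'$ dies with $X$, and $Z$'s leftover symmetric conflict step, which $Y$'s interference cannot have consumed, necessarily also kills $Z'$. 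That last step relies on all neighbouring conflicts being symmetric, a fact you establish only inside case~(i) and never carry into case~(ii). As written, the proposal identifies the right obstruction but does not yet prove the statement.
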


\begin{proof}
	By operational completeness (Definition~\ref{def:operationalCorrespondence}), all five steps of $ \piS{ \hole_a, \ldots, \hole_e } $ have to be emulated in $ \ArbitraryEncoding{\piS{ \hole_a, \ldots, \hole_e }} $, \ie there are $ T_a\!\left( \hole_a, \ldots, \hole_e \right), \ldots, T_e\!\left( \hole_a, \ldots, \hole_e \right) \in \procMA^5 \to \procMA $ such that $ X_{\hole_a, \ldots, \hole_e}: \ArbitraryEncoding{\piS{ \hole_a, \ldots, \hole_e }} \steps T_x\!\left( \hole_a, \ldots, \hole_e \right) $ and $ T_x\!\left( \hole_a, \ldots, \hole_e \right) \asymp \ArbitraryEncoding{\hole_x} $ for all $ x \in \Set{ a, \ldots, e } $, where $ X $ is the upper case variant of $ x $.
	Because $ \piS{ \hole_a, \ldots, \hole_e } $ has no infinite execution and $ \arbitraryEncoding $ reflects divergence, $ \ArbitraryEncoding{\piS{ \hole_a, \ldots, \hole_e }} $ has no infinite execution.
	By compositionality (even in its relaxed form in Definition~\ref{def:goodEncoding}) and because $ \Names{\nullTerm} = \Names{\success} = \emptyset $, there exists a context $ \Context{}{}{\hole_a, \ldots, \hole_e} : \procMA^5 \to \procMA $ such that $ \ArbitraryEncoding{\piS{\hole_a, \ldots, \hole_e}} = \Context{}{}{\hole_a, \ldots, \hole_e} $ and $ \ArbitraryEncoding{\piS{\piSE{1}, \ldots, \piSE{5}}} = \Context{}{}{\ArbitraryEncoding{\piSE{1}}, \ldots, \ArbitraryEncoding{\piSE{5}}} $ for all $ \piSE{1}, \ldots, \piSE{5} \in \Set{ \nullTerm, \success } $.

	By Lemma~\ref{lem:translateConflicts}, the conflicts of $ \piS{ \hole_a, \ldots, \hole_e } $ have to be translated into conflicts of the respective emulations.
	Then, the encoding of $ \piS{ \hole_a, \ldots, \hole_c } $---if it exists---contains at least five steps, one for each of the emulations $ A_{\hole_a, \ldots, \hole_e}, \ldots, E_{\hole_a, \ldots, \hole_e} $, that capture the conflicts between the emulations of neighbouring source term steps.
	If one of the respective conflicts between these emulations is asymmetric there is a maximal execution of $ \ArbitraryEncoding{\piS{ \hole_a, \ldots, \hole_e }} $ that emulates two neighbouring source term steps, \ie unguards two holes of neighbouring steps.
	Let $ x, y \in \Set{ a, \ldots, e } $ such that the asymmetric conflict is between $ X $ and $ Y $ and such that $ \hole_x $ is the hole that can be unguarded first.
	Then there is $ T_{xy}\!\left( \hole_a, \ldots, \hole_e \right) \in \procMA^5 \to \procMA $ and an execution $ XY: \ArbitraryEncoding{\piS{ \hole_a, \ldots, \hole_c }} \steps T_x\!\left( \hole_a, \ldots, \hole_e \right) \steps T_{xy}\!\left( \hole_a, \ldots, \hole_e \right) $ that first emulates the source term step $ x $ and then unguards the hole $ \hole_y $, \ie in that $ \hole_y $ is unguarded in $ T_{xy}\!\left( \hole_a, \ldots, \hole_e \right) $.

	Let $ \piSE{y} = \success $ and $ \piSE{i} = \nullTerm $ for all $ i \in \left( \Set{ a, \ldots, e } \setminus \Set{ y } \right) $.
	Because of $ XY $, we have $ \ArbitraryEncoding{\piS{ \piSE{1}, \ldots, \piSE{5} }} \steps T_x\!\left( \piSE{1}, \ldots, \piSE{5} \right) \steps T_{xy}\!\left( \piSE{1}, \ldots, \piSE{5} \right) $.
	Then $ \piSE{x}\notReachSuccess $ and $ T_{xy}\!\left( \piSE{1}, \ldots, \piSE{5} \right)\reachSuccess $.
	By success sensitiveness, then $ \ArbitraryEncoding{\piSE{x}}\notReachSuccess $.
	Moreover, because of $ T_x\!\left( \piSE{1}, \ldots, \piSE{5} \right) \steps T_{xy}\!\left( \piSE{1}, \ldots, \piSE{5} \right) $, the reachability of success in the term $ T_{xy}\!\left( \piSE{1}, \ldots, \piSE{5} \right) $ implies $ T_x\!\left( \piSE{1}, \ldots, \piSE{5} \right)\reachSuccess $.
	Thus $ \ArbitraryEncoding{\piSE{x}} \asymp T_x\!\left( \piSE{1}, \ldots, \piSE{5} \right) $ violates the requirement that $ \asymp $ respects success.
	Hence, for each pair of neighbouring steps we need a symmetric conflict.

	Since $ \arbitraryEncoding $ preserves distributability (Definition~\ref{def:distributabilityPreservation}) and by Lemma~\ref{def:distributabilityPreservation}, each pair of distributable steps in $ \piS{ \hole_a, \ldots, \hole_e } $ has to be translated into emulations that are distributable in $ \ArbitraryEncoding{\piS{ \hole_a, \ldots, \hole_e }} $. Let $ X, Y, Z \in \Set{ A, B, C, D, E } $ be such that $ X_{\hole_a, \ldots, \hole_e} $ and $ Z_{\hole_a, \ldots, \hole_e} $ are distributable in $ \ArbitraryEncoding{\piS{ \hole_a, \ldots, \hole_e }} $ but $ Y_{\hole_a, \ldots, \hole_e} $ is in conflict with $ X_{\hole_a, \ldots, \hole_e} $ as well as $ Z_{\hole_a, \ldots, \hole_e} $.
	This implies that $ \ArbitraryEncoding{\piS{ \hole_a, \ldots, \hole_e }} $ is distributable into $ T_1, T_2 \in \procMA $ such that $ X_{\hole_a, \ldots, \hole_e} $ is an execution of $ T_1 $ and $ Z_{\hole_a, \ldots, \hole_e} $ is an execution of $ T_2 $. Since $ Y_{\hole_a, \ldots, \hole_e} $ is in conflict with $ X_{\hole_a, \ldots, \hole_e} $ and $ Z_{\hole_a, \ldots, \hole_e} $ and because all three emulations are executions of $ \ArbitraryEncoding{\piS{ \hole_a, \ldots, \hole_e }} $, there is one step of $ Y_{\hole_a, \ldots, \hole_e} $ that is in conflict with one step of $ X_{\hole_a, \ldots, \hole_e} $ and there is one (possibly the same) step of $ Y_{\hole_a, \ldots, \hole_e} $ that is in conflict with one step of $ Z_{\hole_a, \ldots, \hole_e} $. Moreover, since $ X_{\hole_a, \ldots, \hole_e} $ and $ Z_{\hole_a, \ldots, \hole_e} $ are distributable, if a single step of $ Y_{\hole_a, \ldots, \hole_e} $ is in conflict with $ X_{\hole_a, \ldots, \hole_e} $ as well as $ Z_{\hole_a, \ldots, \hole_e} $ then, by Lemma~\ref{lem:onesidedConflictM} and since asymmetric conflicts are not sufficient, this step reduces and $ \maOpenAct $-act in $ T_1 $ or $ T_2 $ and removes the border of an ambient in the respective other term.

	Assume that for all such combinations $ X $, $ Y $, and $ Z $, the conflicts between $ Y_{\hole_a, \ldots, \hole_e} $ and $ X_{\hole_a, \ldots, \hole_e} $ or $ Z_{\hole_a, \ldots, \hole_e} $ are ruled out by a single step of $ Y_{\hole_a, \ldots, \hole_e} $.
	Then this step reduces an $ \maOpenAct $-action in one of the executions $ X_{\hole_a, \ldots, \hole_e} $ and $ Z_{\hole_a, \ldots, \hole_e} $ and an ambient in the respective other, \ie $ X_{\hole_a, \ldots, \hole_e} $ and $ Y_{\hole_a, \ldots, \hole_e} $ compete either for an $ \maOpenAct $-action or for an ambient and $ Y_{\hole_a, \ldots, \hole_e} $ and $ Z_{\hole_a, \ldots, \hole_e} $ compete for the respective other kind. Without loss of generality let us assume that $ A_{\hole_a, \ldots, \hole_e} $ and $ B_{\hole_a, \ldots, \hole_e} $ compete for an $ \maOpen{n} $-action and, thus, $ B_{\hole_a, \ldots, \hole_e} $ and $ C_{\hole_a, \ldots, \hole_e} $ compete for the ambient $ n $, $ C_{\hole_a, \ldots, \hole_e} $ and $ D_{\hole_a, \ldots, \hole_e} $ compete for an $ \maOpen{n} $-action, $ D_{\hole_a, \ldots, \hole_e} $ and $ E_{\hole_a, \ldots, \hole_e} $ compete for another ambient $ n $, $ E_{\hole_a, \ldots, \hole_e} $ and $ A_{\hole_a, \ldots, \hole_e} $ compete for an $ \maOpen{n} $-action, and $ A_{\hole_a, \ldots, \hole_e} $ and $ B_{\hole_a, \ldots, \hole_e} $ compete for a third ambient $ n $. This is a contradiction, because $ A_{\hole_a, \ldots, \hole_e} $ and $ B_{\hole_a, \ldots, \hole_e} $ cannot compete for both an $ \maOpenAct $-action and an ambient.

	We conclude that there is at least one triple of emulations $ X_{\hole_a, \ldots, \hole_e} $, $ Y_{\hole_a, \ldots, \hole_e} $, and $ Z_{\hole_a, \ldots, \hole_e} $ such that the conflict of $ Y_{\hole_a, \ldots, \hole_e} $ with $ X_{\hole_a, \ldots, \hole_e} $ and with $ Z_{\hole_a, \ldots, \hole_e} $ results from two different steps in $ Y_{\hole_a, \ldots, \hole_e} $. Because $ X_{\hole_a, \ldots, \hole_e} $ and $ Z_{\hole_a, \ldots, \hole_e} $ are distributable, the reduction steps of $ X_{\hole_a, \ldots, \hole_e} $ that lead to the conflicting step with $ Y_{\hole_a, \ldots, \hole_e} $ and the reduction steps of $ Z_{\hole_a, \ldots, \hole_e} $ that lead to the conflicting step with $ Y_{\hole_a, \ldots, \hole_e} $ are distributable.
	We conclude, that there is at least one emulation of $ y $, \ie one execution $ Y_{\hole_a, \ldots, \hole_e}: \ArbitraryEncoding{\piS{ \hole_a, \ldots, \hole_e }} \steps T_y \asymp \ArbitraryEncoding{\hole_y} $, starting with two distributable executions such that one is (in its last step) in conflict with the emulation of $ x $ in $ X_{\hole_a, \ldots, \hole_e}: \ArbitraryEncoding{\piS{ \hole_a, \ldots, \hole_e }} \steps T_x \asymp \ArbitraryEncoding{\hole_x} $ and the other one is in conflict with the emulation of $ z $ in $ Z_{\hole_a, \ldots, \hole_e}: \ArbitraryEncoding{\piS{ \hole_a, \ldots, \hole_e }} \steps T_z \asymp \ArbitraryEncoding{\hole_z} $. In particular this means that also the two steps of $ Y_{\hole_a, \ldots, \hole_e} $ that are in conflict with a step in $ X_{\hole_a, \ldots, \hole_e} $ and a step in $ Z_{\hole_a, \ldots, \hole_e} $ are distributable.
	Hence, there is no possibility to ensure that these two conflicts are decided consistently, \ie there is a maximal execution of $ \ArbitraryEncoding{\piS{ \hole_a, \ldots, \hole_e }} $ that emulates $ X_{\hole_a, \ldots, \hole_e} $ but neither $ Y_{\hole_a, \ldots, \hole_e} $ nor $ Z_{\hole_a, \ldots, \hole_e} $.

	In the set $ \Set{ X_{\hole_a, \ldots, \hole_e} \mid X \in \Set{ A, B, C, D, E } } $ there are---apart from $ X_{\hole_a, \ldots, \hole_e} $, $ Y_{\hole_a, \ldots, \hole_e} $, and $ Z_{\hole_a, \ldots, \hole_e} $---two remaining executions.
	One of them, say $ X_{\hole_a, \ldots, \hole_e}' $, is in conflict with $ X_{\hole_a, \ldots, \hole_e} $ and the other one, say $ Z_{\hole_a, \ldots, \hole_e}' $, is in conflict with $ Z_{\hole_a, \ldots, \hole_e} $.
	Since $ X_{\hole_a, \ldots, \hole_e} $ is emulated successfully, $ X_{\hole_a, \ldots, \hole_e}' $ cannot be emulated.
	Moreover, note that $ Y_{\hole_a, \ldots, \hole_e} $ and $ Z_{\hole_a, \ldots, \hole_e}' $ are distributable.
	Thus, also $ Z_{\hole_a, \ldots, \hole_e}' $ and the partial execution of $ Y_{\hole_a, \ldots, \hole_e} $ that leads to the conflict with $ Z_{\hole_a, \ldots, \hole_e} $ are distributable.
	Moreover, also the step of $ Y_{\hole_a, \ldots, \hole_e} $ that already rules out $ Z_{\hole_a, \ldots, \hole_e} $ cannot be in conflict with a step of $ Z_{\hole_a, \ldots, \hole_e}' $.
	Thus, although the successful completion of $ Z_{\hole_a, \ldots, \hole_e} $ is already ruled out by the conflict with $ Y_{\hole_a, \ldots, \hole_e} $, there is some step of $ Z_{\hole_a, \ldots, \hole_e} $ left, that is in conflict with one step in $ Z_{\hole_a, \ldots, \hole_e}' $.
	Hence, the conflict between $ Z_{\hole_a, \ldots, \hole_e} $ and $ Z_{\hole_a, \ldots, \hole_e}' $ cannot be ruled out by the partial execution described so fare that leads to the emulation of $ X_{\hole_a, \ldots, \hole_e} $ but forbids to complete the emulations of $ X_{\hole_a, \ldots, \hole_e}' $, $ Y_{\hole_a, \ldots, \hole_e} $, and $ Z_{\hole_a, \ldots, \hole_e} $.
	Thus, it cannot be avoided that $ Z_{\hole_a, \ldots, \hole_e} $ wins this conflict, \ie that also $ Z_{\hole_a, \ldots, \hole_e}' $ cannot be completed.
	We conclude that there is a maximal execution of $ \ArbitraryEncoding{\piS{ \hole_a, \ldots, \hole_e }} $ such that only one of the five source term steps of $ \piS{ \hole_a, \ldots, \hole_e } $ is emulated and only one hole is unguarded.
\end{proof}

Again, the above proof is in its main idea similar to the respective proof of the corresponding result for encodings from \piMix into \piSep in \cite{petersNestmannGoltz13TR}.
However, since that proof depends on the expressive power of the considered target language to reason about the properties of the counterexample, we have to adapt it to mobile ambients.
Finally, we show again that this distribution of the conflict rules out the possibility of a good and distributability-preserving encoding.

\begin{theorem}
	\label{thm:noGoodEncodingMixSepGreatM}
	There is no good and distributability-preserving encoding from \piMix into \MA.
\end{theorem}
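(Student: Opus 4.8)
The plan is to argue by contradiction, in close analogy to the proof of Theorem~\ref{thm:noGoodEncodingMA}, but now using the non-local \patternGreatM $ \piS{\piSE{1}, \ldots, \piSE{5}} $ of Example~\ref{exa:patternGreatMInPiMix} as the separating source term. I would assume that a good and distributability-preserving encoding $ \arbitraryEncoding : \procMix \to \procMA $ exists. The structural heart of the argument has already been isolated in Lemma~\ref{lem:splitGreatMAM}: it provides a context with $ \ArbitraryEncoding{\piS{\hole_a, \ldots, \hole_e}} = \Context{}{}{\hole_a, \ldots, \hole_e} $ and a \emph{maximal} execution of this term that emulates exactly one of the five source steps, that is, unguards exactly one hole $ \hole_x $. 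What then remains is to turn this degenerate execution into a violation of success sensitiveness by instantiating the holes appropriately.

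Concretely, I would instantiate $ \piSE{x} = \nullTerm $ and $ \piSE{i} = \success $ for every $ i \in \left( \Set{ a, \ldots, e } \setminus \Set{ x } \right) $. The key combinatorial fact about the counterexample is that the five steps correspond to the five edges of a cycle on the five summands, so a set of steps can fire together exactly when the edges form a matching; since the cycle has odd length, every \emph{maximal} matching has size two. Hence every finite maximal execution of $ \piS{\piSE{1}, \ldots, \piSE{5}} $ fires exactly two (non-adjacent, hence distributable) steps and thereby unguards two \emph{distinct} holes. As only the single hole $ \hole_x $ carries $ \nullTerm $, at least one of the two unguarded holes carries $ \success $, which remains unguarded until the end. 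Therefore $ \piS{\piSE{1}, \ldots, \piSE{5}}\mustReachSuccessFinite $, and the term is convergent because it contains no replication.

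Now I would assemble the contradiction. Since $ \arbitraryEncoding $ is operationally sound, reflects divergence, is success-sensitive, and $ \asymp $ is a success respecting reduction bisimulation, Lemma~\ref{lem:mustSuccessSensitiveness} yields $ \ArbitraryEncoding{\piS{\piSE{1}, \ldots, \piSE{5}}}\mustReachSuccessFinite $. On the other hand, under this instantiation the maximal execution supplied by Lemma~\ref{lem:splitGreatMAM} never unguards any hole other than $ \hole_x $: the material filled into the four remaining holes stays guarded throughout, so no $ \success $ is ever exposed, exactly as for the all-$ \nullTerm $ instantiation whose encoding cannot reach success by success sensitiveness. The execution ends in a state $ T_x \asymp \ArbitraryEncoding{\hole_x} = \ArbitraryEncoding{\nullTerm} $; since $ \nullTerm\notReachSuccess $ gives $ \ArbitraryEncoding{\nullTerm}\notReachSuccess $ and $ \asymp $ respects success, also $ T_x\notReachSuccess $. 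By divergence reflection this maximal execution is finite, so it is a finite maximal execution of $ \ArbitraryEncoding{\piS{\piSE{1}, \ldots, \piSE{5}}} $ along which success is never reached, contradicting $ \ArbitraryEncoding{\piS{\piSE{1}, \ldots, \piSE{5}}}\mustReachSuccessFinite $. Hence no good and distributability-preserving encoding from \piMix into \MA exists.

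I expect the genuine difficulty to lie not in this final assembly but in Lemma~\ref{lem:splitGreatMAM}, on which everything rests. The hard part is showing that the cyclic, mutually exclusive (symmetric) conflicts of the odd cycle cannot all be realised simultaneously in \MA: by Lemma~\ref{lem:onesidedConflictM} a symmetric conflict overlapping two distributable emulations forces an $ \maOpenAct $-step on one side and the corresponding ambient on the other, and by the parity argument of Lemma~\ref{lem:onesidedConflictGreatM} these ``$ \maOpenAct $-versus-ambient'' roles cannot be assigned consistently around a cycle of odd length. The theorem itself then only has to carry this structural impossibility into the may-/must-testing contradiction sketched above, which is the routine part.
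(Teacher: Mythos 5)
Your proposal is correct in substance and follows the paper's architecture: all the weight is placed on Lemma~\ref{lem:splitGreatMAM}, and the theorem is then obtained by instantiating the holes of $ \piS{\hole_a, \ldots, \hole_e} $ so that the degenerate maximal execution supplied by that lemma contradicts success sensitiveness. Where you differ is in the final assembly. The paper chooses $ \piSE{x} = \nullTerm = \piSE{y} $ for some step $ y $ parallel to $ x $ and $ \piSE{z} = \success $ otherwise, and derives the contradiction from the behaviour of the intermediate derivative $ S_x $ with $ x: \piS{\piSE{1}, \ldots, \piSE{5}} \step S_x $, for which $ S_x\reachSuccess $ but $ S_x\notMustReachSuccessFinite $; the target state reached after emulating only $ x $ cannot exhibit both the ability to reach success and the ability to deadlock without it. You instead set only $ \piSE{x} = \nullTerm $ and the remaining four continuations to $ \success $, use the odd-cycle/maximal-matching observation to get $ \piS{\piSE{1}, \ldots, \piSE{5}}\mustReachSuccessFinite $ together with convergence, lift this to the literal translation via Lemma~\ref{lem:mustSuccessSensitiveness}, and contradict it with the finite maximal execution that unguards only $ \hole_x $. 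This is a legitimate and arguably more direct route, since it only needs a must-property of the literal translation rather than an analysis of the $ \asymp $-class of an intermediate target state. Your closing assessment is also accurate: the content specific to \MA lives entirely in Lemmas~\ref{lem:onesidedConflictM}, \ref{lem:onesidedConflictGreatM} and \ref{lem:splitGreatMAM}.

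One imprecision should be fixed. You write that the execution ends in $ T_x \asymp \ArbitraryEncoding{\hole_x} = \ArbitraryEncoding{\nullTerm} $ and conclude $ T_x\notReachSuccess $. By operational soundness the deadlocked end state corresponds to the encoding of a source \emph{derivative}, which here is $ S_x $, not the hole content; under your instantiation $ S_x\mustReachSuccessFinite $, so $ \asymp $ cannot yield $ T_x\notReachSuccess $ --- if anything it yields $ T_x\reachSuccess $, which is a second, independent contradiction with a deadlocked success-free state rather than support for your first one. This does not damage your main line: the fact you actually need, namely that the degenerate maximal execution never exposes $ \success $, already follows from your structural observation that only $ \hole_x $, filled with $ \ArbitraryEncoding{\nullTerm} $, is ever unguarded, and that the context alone cannot produce success, as witnessed by the all-$ \nullTerm $ instantiation and success sensitiveness.
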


The proof of this Theorem very closely follows the proof of the corresponding Theorem for encodings from \piMix into \piSep in \cite{petersNestmannGoltz13TR}.
It picks the maximal execution of the translation that unguards|according to Lemma~\ref{lem:splitGreatMAM}|only one hole $ \hole_x $ by emulating only one step $ x $ of $ \piS{\piSE{1}, \ldots, \piSE{5}} $.
Then, we can choose $ \piSE{1}, \ldots, \piSE{5} \in \Set{ \maNull, \success } $ such that $ \piSE{x} = \nullTerm = \piSE{y} $, where $ y $ is one of the two steps that is parallel to $ x $, and $ \piSE{z} = \success $ for all other cases.
Accordingly, for the result $ S_x $ of the step $ x: \piS{\piSE{1}, \ldots, \piSE{5}} \step S_x $, we have $ S_x\notMustReachSuccessFinite $, by doing $ y $ next, but $ S_x\reachSuccess $, because of success in the respective other step that can be executed after $ x $.
However, the maximal execution of $ \piS{\hole_a, \ldots, \hole_e} $ that unguards only $ \hole_x $ and emulates only $ x $ cannot have the same behaviour \wrt success.
After emulating $ x $ we reach a term that cannot offer the possibility to reach success (without the emulation of another source term step) as well as to deadlock without reaching success.
This violates our requirements on good encodings.

\section{Distributing Mobile Ambients}
\label{sec:distributeMA}

Theorem~\ref{thm:noGoodEncodingMA} shows that mobile ambients are not as distributable as the \joinCal.
Nonetheless, \cite{fournetLevySchmitt00} presents an encoding from \MA into \join in order to build a distributed implementation of mobile ambients in Jocaml (\cite{jocaml99}).
Let us consider what this encoding does with our counterexample $ \maM $ for the non-existence of a good and distributability-preserving encoding from \MA into \join.
The encoding in \cite{fournetLevySchmitt00} translates each ambient into a single unique join definition.
Then it splits $ \maInAct $, $ \maOutAct $, and $ \maOpenAct $-actions into respective subactions that are controlled by the join definition that represents the parent ambient in the source.
Therefore, to perform the emulations of the distributed steps $ a $ and $ c $ of $ \maM $, the respective parts of the implementation first have to register their desire to do these steps with their parent join definition.
Unfortunately, as each join definition is a single location, these two steps interact with the same join definition, so they cannot be considered as distributed.
Accordingly, the encoding presented in \cite{fournetLevySchmitt00} is not distributability-preserving in our sense, because the emulations of $ a $ and $ c $ are synchronised.

Indeed, the authors of \cite{fournetLevySchmitt00} already state that the explicit control of subactions by the translation of the parent ambient introduces some form of synchronisation.
However, they claim that the form of synchronisation introduced by the presented encoding is less crucial than, \eg, a centralised solution.
Our results support the quality of their solution, by proving that no good and fully distributability-preserving encoding from \MA into \join exists.
So, a bit of synchronisation is indeed necessary.
But, our results also suggest possible ways to circumvent the problems in the distribution of mobile ambients altogether by proposing small alterations of the source calculus itself in order to prevent \patternM-patterns from the outset.

By Lemma~\ref{lem:onesidedConflictM}, all \patternM in mobile ambients rely on a conflict with an $ \maOpenAct $-action that addresses two different ambients with the same name.
A natural solution to circumvent this problem is to avoid different ambients with the same name.
By Lemma~\ref{lem:onesidedConflictM}, mobile ambients with unique ambient names cannot express the pattern \patternM.

\begin{corollary}
	There are no \patternM in mobile ambients, where all ambient names are unique.
\end{corollary}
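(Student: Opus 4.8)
The plan is to argue by contradiction: suppose some $P \in \procMA$ all of whose ambient names are pairwise distinct is an \patternM in the sense of Definition~\ref{def:synchronisationPatternM}, with the three alternative steps $a$, $b$, $c$ where $a$ and $c$ are parallel and $b$ is in conflict with both. I would first pin down what the step $b$ can be. Since the conflicts required by Definition~\ref{def:synchronisationPatternM} are symmetric conflicts in the sense of Definition~\ref{def:distributableSteps}, Lemma~\ref{lem:onesidedConflictM} immediately rules out that $b$ reduces an $\maInAct$- or an $\maOutAct$-action, so $b$ must reduce an $\maOpenAct$-action: it consumes a single capability $\maOpen{n}$ together with the single ambient $\maLoc{n}{Q}$ that this capability opens. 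Hence the only non-recurrent capabilities of $b$ are this $\maOpen{n}$-capability and the ambient $\maLoc{n}{Q}$.

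Next I would locate, for each of $b$'s two symmetric conflicts, the shared non-recurrent capability. By the definition of symmetric conflict, the step $a$ (and likewise $c$) must reduce one of $b$'s two non-recurrent capabilities. The crucial use of the uniqueness hypothesis comes here: I claim that neither $a$ nor $c$ can be in conflict with $b$ via the $\maOpen{n}$-capability. Indeed, a step reducing this capability is necessarily an $\maOpenAct$-step that opens an ambient named $n$; but with pairwise distinct ambient names there is exactly one such ambient, namely $\maLoc{n}{Q}$, so any such step opens $\maLoc{n}{Q}$ and therefore coincides with $b$, contradicting that $P_a$, $P_b$, $P_c$ are pairwise different. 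Consequently both conflicts of $b$ must be over the remaining non-recurrent capability, the ambient $\maLoc{n}{Q}$.

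This yields the contradiction. If both $a$ and $c$ reduce the ambient $\maLoc{n}{Q}$ non-recurrently, then $a$ and $c$ reduce the same non-recurrent capability and are therefore in conflict by Definition~\ref{def:distributableSteps}, contradicting condition~(2) of Definition~\ref{def:synchronisationPatternM} that $a$ and $c$ be parallel. Hence no such $P$ exists, which is exactly the claim; the result is thus a direct corollary of Lemma~\ref{lem:onesidedConflictM} together with the observation about uniqueness.

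I expect the main obstacle to be the middle step, namely justifying carefully that, once ambient names are unique, the $\maOpen{n}$-capability can be exercised in only one way and hence can never be the common non-recurrent resource of two distinct steps. This is precisely the point at which the construction of Example~\ref{exa:nonLocalMMA} breaks down: there the symmetric conflict between $a$ and $b$ was created by letting the single $\maOpen{n_1}$-capability choose between two different ambients sharing the name $n_1$, and removing that freedom forces both of $b$'s conflicts onto one ambient. The remaining steps are routine bookkeeping about which capabilities a single $\maOpenAct$-, $\maInAct$-, or $\maOutAct$-reduction consumes.
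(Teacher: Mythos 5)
Your argument is correct and follows the route the paper intends: the paper derives this corollary directly from Lemma~\ref{lem:onesidedConflictM} together with the observation (stated only in the surrounding prose) that the symmetric conflict on the $ \maOpenAct $-action needs two distinct ambients sharing the name $ n $, and your middle step makes exactly that observation precise before concluding via the parallelism of $ a $ and $ c $. You merely spell out explicitly what the paper leaves implicit, so this matches the paper's approach.
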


Without such an \patternM as counterexample, our proof of Theorem~\ref{thm:noGoodEncodingMA} would no longer work.
Instead, we can show that there is then no good and distributability-preserving encoding from \piAsyn into \MA, by using the example of an \patternM in \piAsyn of \cite{petersNestmannGoltz13} as counterexample and following a similar proof strategy as for the separation result between \piAsyn and \join.

\begin{claim}
  If mobile ambients forbid for ambients with the same name, then there is no good and distributability-preserving encoding from \piAsyn into \MA.
\end{claim}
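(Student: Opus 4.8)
The plan is to mirror the separation of $\piAsyn$ from $\join$, replacing the target-language fact ``$\join$ has no non-local $\patternM$'' by the preceding corollary, which asserts that mobile ambients with pairwise distinct ambient names contain no $\patternM$ at all. Crucially, the source language is unchanged, so I would reuse the very counterexample and adaptation from the $\piAsyn$-into-$\join$ proof in \cite{petersNestmannGoltz13}. Concretely, I take the non-local $\patternM$ of $\piAsyn$, an instrumented instance $E$ of $\Output{y}{u} \mid \Input{y}{x}.P_1 \mid \Output{y}{v} \mid \Input{y}{x}.P_2$, whose distributable steps $a$ and $c$ reduce the first output with the first input and the second output with the second input, while the conflicting step $b$ reduces the first input with the second output. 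I choose $P_1$, $P_2$ and the transmitted values exactly as in \cite{petersNestmannGoltz13} so that the derivatives satisfy $S_a\mustReachSuccessFinite$ and $S_c\mustReachSuccessFinite$ but $S_b\notReachSuccess$; that is, success separates the conflicting step $b$ from the distributable steps $a$ and $c$.

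First I would establish the analogue of Lemma~\ref{lem:splitUpB}: every good and distributability-preserving encoding $\arbitraryEncoding : \procAsyn \to \procMA$ into mobile ambients with unique names must split up the conflict of $b$ with $a$ and $c$, so that some maximal execution of $\ArbitraryEncoding{E}$ emulates $a$ but not $c$, and vice versa. The reasoning is the one underlying Lemma~\ref{lem:splitUpB} and uses only the abstract notion of conflict together with distributability preservation (Definition~\ref{def:distributabilityPreservation}): the emulations of $a$ and $c$ must remain distributable, and, as in Lemma~\ref{lem:translateConflicts}, the symmetric conflicts $a$--$b$ and $b$--$c$ must be reflected as conflicts of the corresponding emulations. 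If a single step of the $b$-emulation decided both conflicts while the $a$- and $c$-emulations stayed distributable, the encoding would realise a non-local $\patternM$ in the target, contradicting the corollary. The conflict is therefore carried by two distinct steps of the $b$-emulation, which, since the $a$- and $c$-emulations are distributable, are themselves distributable and cannot be resolved consistently; this yields the claimed execution emulating only one of $a$ and $c$.

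Given this split, I would conclude exactly as in the proof of Theorem~\ref{thm:noGoodEncodingMA}. Fix a maximal execution of $\ArbitraryEncoding{E}$ that emulates $a$ but not $c$ (the other case is symmetric); by $S_a\mustReachSuccessFinite$, success sensitiveness, and Lemma~\ref{lem:mustSuccessSensitiveness} this execution reaches success without emulating $c$. Using the relaxed compositionality of Definition~\ref{def:goodEncoding} I write $\ArbitraryEncoding{E} = \Context{}{}{\ArbitraryEncoding{S_1}, \ArbitraryEncoding{S_2}}$ for the distributable halves $S_1 = \Output{y}{u} \mid \Input{y}{x}.P_1$ and $S_2 = \Output{y}{v} \mid \Input{y}{x}.P_2$ carrying $a$ and $c$, and I take the same adaptation $S_2'$ of $S_2$ used in \cite{petersNestmannGoltz13}, which deletes exactly the capability enabling $c$ while keeping $a$ and $b$ available and preserving $\FreeNames{S_2'} = \FreeNames{S_2}$, so that $\ArbitraryEncoding{S_1 \mid S_2'} = \Context{}{}{\ArbitraryEncoding{S_1}, \ArbitraryEncoding{S_2'}}$. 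Since the success-reaching execution never touched the emulation of $c$, it survives in $\ArbitraryEncoding{S_1 \mid S_2'}$, hence $\ArbitraryEncoding{S_1 \mid S_2'}\reachSuccess$; but by the instrumentation $S_1 \mid S_2'$ reaches success only via $c$, which is gone, so $(S_1 \mid S_2')\notReachSuccess$, contradicting success sensitiveness.

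I expect the main obstacle to be the split-up lemma, and within it precisely the step excluding a consistent single-step resolution of both conflicts. Unlike the target $\join$ in Lemma~\ref{lem:splitUpB}, mobile ambients with unique names still admit \emph{asymmetric} conflicts, so I cannot simply invoke the absence of every conflict of the relevant shape; I must argue instead, via Lemma~\ref{lem:onesidedConflictM}, that keeping both emulations distributable while deciding both (symmetric, mutually exclusive) source conflicts with one step requires a symmetric $\maOpenAct$-conflict on two equally named ambients, which unique naming forbids, whereas an asymmetric conflict cannot provide the mutual exclusion that a faithful emulation of the source $\patternM$ demands. A secondary care point is simply checking that the $\piAsyn$ counterexample and its adaptation $S_2'$ carry over verbatim, which they do because the source language is identical to that of the $\piAsyn$-into-$\join$ separation.
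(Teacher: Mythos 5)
Your proposal takes essentially the same route the paper sketches: reuse the non-local \patternM of \piAsyn from \cite{petersNestmannGoltz13} as counterexample, adapt the split-up lemma (the analogue of Lemma~\ref{lem:splitUpB}) to the target \MA with unique names via Lemma~\ref{lem:onesidedConflictM} and the corollary, and conclude exactly as in Theorem~\ref{thm:noGoodEncodingMA}. Be aware that the paper deliberately leaves this claim without a formal proof because ``unique ambient names'' admits several syntactic formalisations, so your sketch is, if anything, more detailed than the paper's own argument, but it shares the same remaining obligation: the split-up step (in particular the exclusion of a single-step resolution of both conflicts and the treatment of asymmetric conflicts) must ultimately be carried out relative to a concrete formalisation of name uniqueness.
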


The proof of the above claim relies of the formalisation of the requirement that no two different ambients have the same name in the definition of the calculus.
More precisely, we need to adapt the proof that every good and distributability-perserving encoding has to split up the conflict in the \patternM of $ b $ with $ a $ and $ c $ to the target language \MA with unique ambient names.
Since there are several different ways to implement this requirement in the syntax of mobile ambients, we do not formally prove the above claim here.
However, we expect that this proof would exploit the same strategy as in \cite{petersNestmannGoltz13TR} and require only small adaptations due to the definition of the calculus.

Actually, the possibility to have different ambients with the same name was already identified as problematic in the encoding of \cite{fournetLevySchmitt00}.
To circumvent this problem, the encoding introduces unique identifiers for all ambients and one of the reasons for the interaction with the respective translation of the parent ambient to control the translations of ambient actions is that these translations of parent ambients keep the knowledge about the unique identifiers of their children.
Thus, forbidding different ambients with the same name not only allows for completely distributed implementations of the calculus but also significantly simplifies translations that follow the strategy of \cite{fournetLevySchmitt00}.

To obtain strategies to implement this requirement, we can have a look at other distributed calculi with unique location names.
The \joinCal (\cite{fournetGonthier96}) ensures the uniqueness of its locations by combining input prefixes with restriction in join definitions.
Thus, every join definition, \ie location, introduces its own name space.
Interaction is limited to such restricted names with a clear and unique destination.
The advantage is that the uniqueness of location names is ensured by definition;
the disadvantage is that some forms of interaction---\eg a two-way handshake---are syntactically more difficult due to these sharp restriction borders.
The distributed \piCal (\cite{hennessy07}) has a flat structure of locations and ensures uniqueness by the structural congruence rule $ \maLoc{n}{P} \mid \maLoc{n}{Q} \equiv \maLoc{n}{P \mid Q} $ that unifies different parts of a location.
However, adding such a rule to mobile ambients requires a non-trivial adaptation of the semantics, because the $ \maOpenAct $, $ \maInAct $, and $ \maOutAct $-actions would need to first collect all ambient parts that are possibly dispersed over the term structure before they can proceed.
Moreover, following this approach would not completely rule out different ambients with the same name but only different such ambients in the same parent ambient (or at top-level). This is, however, sufficient to ensure that there are no \patternM.

\section{Conclusions}
\label{sec:conclusions}

We proved that there is no good and distributability-preserving encoding from mobile ambients (\MA) into the \joinCal (\join) and neither from the standard \piCal with mixed choice (\piMix) into mobile ambients.
Note that these results stay valid also for the extension of \MA with communication prefixes as described in \cite{cardelliGordon98, cardelliGordon00}, because these communications are local steps that cannot be in conflict to steps with $ \maInAct $, $ \maOutAct $, or $ \maOpenAct $-actions.
Thus, all conflicts added by the extension with communication primitives are local and not relevant for the preservation of distributability.
Consequently, by extending the results of \cite{petersNestmannGoltz13}, we place mobile ambients on the same level as the \piCal with separate choice (\piSep) and the asynchronous \piCal (\piAsyn) above \join and below \piMix.
As visualized in Figure~\ref{fig:hierarchy}, mobile ambients contain non-local \patternM but cannot express a non-local \patternGreatM without asymmetric conflicts.

\begin{figure}[t]
	\centering
	\begin{tikzpicture}[node distance=3cm, auto]
		\node (mix)		at (0,1.8)		{\piMix};
		\node (sep)		at (-1.5, 0)	{\piSep};
		\node (asyn)	at (0, 0)		{\piAsyn};
		\node (ma)		at (1.5, 0)		{\MA};
		\node (join)	at (0, -1.8)	{\join};

		% Stern
		\draw[thick] node[star, star points=5,star point ratio=2.25, draw=black, inner sep=3.5pt] at (3, 1) {};
		\draw[dashed] (-2, 1) -- (2,1) -- (3,2);
		\draw[dashed] (2,1) -- (3,0);

		% M
		%\draw node at (-3,-1) {M};
		\draw[thick] (-3.5,-1.2) -- (-3.25, -.7) -- (-3, -1.2) -- (-2.75,-.7) -- (-2.5,-1.2);
		\draw[dashed] (2,-1) -- (-2,-1) -- (-3,-2);
		\draw[dashed] (-2,-1) -- (-3,0);
	\end{tikzpicture}
	\vspace*{-1em}
	\caption{Distributability in Pi-like Calculi.}
	\label{fig:hierarchy}
\end{figure}
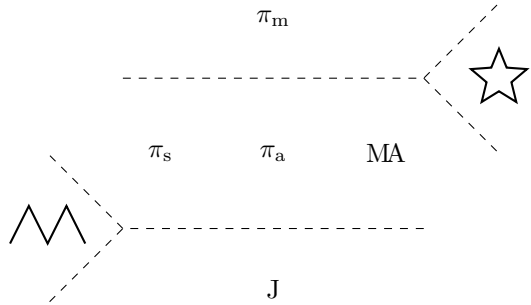

Asymmetric conflicts, as present in mobile ambients, constitute a variant of conflicts that turns out to be not as crucial for distributed implementations as the standard symmetric conflicts that we usually find in calculi.  Nonetheless, the existence of non-local \patternM make fully distributed implementations of mobile ambients difficult---as already observed in \cite{fournetLevySchmitt00}.  However, since the reason for these difficulties is now clearly captured in a simple synchronisation pattern, we can more easily derive strategies to adapt mobile ambients to a distributed calculus without such problems.

Interestingly, the extension of mobile ambients into mobile safe ambients in \cite{leviSangiorgi03} does not solve this problem.
The main idea of safe ambients is that actions require an explicit agreement on this action by both participating ambients. Therefore, safe ambients augment the respective target ambient of an action~$ a $ with a matching complementary action $ \overline{a} $.
This extension, however does neither change the power to express the pattern \patternM nor the asymmetric nature of conflicts with steps that do not rely on an $ \maOpenAct $-action.
In fact, the $ \maM $ in mobile ambients, \ie the pattern \patternM, becomes
\begin{center}
	$ \left( \maOpen{n_1} \mid \maLoc{n_1}{\maOpenO{n_1} \mid P_1} \right) \mid \left( \maLoc{n_1}{\maOpenO{n_1} \mid \maIn{n_2}.P_2} \mid \maLoc{n_2}{\maInO{n_1} \mid P_3} \right) $
\end{center}
in safe ambients.
This term is again an \patternM sharing the kind of steps and properties of $ \maM $.
Thus, we obtain the same separation result as in Theorem~\ref{thm:noGoodEncodingMA} with safe ambients using the above counterexample.
Moreover, since safe ambients do also not contain \patternGreatM, also Theorem~\ref{thm:noGoodEncodingMixSepGreatM} stays valid for safe ambients.

The most obvious way to obtain a fully distributed variant of mobile ambients is to ensure uniqueness of ambient names.
As a consequence, actions of mobile ambients have a clear and unique destination.
Note that, having clear and unique destinations for all actions that travel location borders is also crucial for the distributability of other calculi such as the \joinCal or the distributed \piCal.
Such unique destinations significantly limit the possibility of conflicts and ensure that all remaining conflicts of the language are local.
As a consequence, distributed implementations of such languages do not need to introduce synchronisations and, thus, do not change their semantics.
Hence, keeping the destinations for all actions that travel location borders unique, is a good strategy to build distributed calculi in general.

%%%%%%%%%%%%%%%%%%
%  bibliography  %
%%%%%%%%%%%%%%%%%%

\bibliography{DistributabilityMobileAmbients}

\end{document}